
\documentclass[]{interact}

\usepackage{epstopdf}

\usepackage[numbers,sort&compress]{natbib}
\bibpunct[, ]{[}{]}{,}{n}{,}{,}

\usepackage{color}
\theoremstyle{plain}
\newtheorem{theorem}{Theorem}[section]

\usepackage{multirow}
\theoremstyle{definition}

\theoremstyle{remark}

\usepackage{hyperref}
\usepackage{csquotes}
\usepackage{lscape}
\usepackage[format=hang]{caption}

\begin{document}


\title{A prediction criterion for working correlation structure selection in GEE}

\author{
\name{Gul~Inan\textsuperscript{a}, \thanks{CONTACT Gul~Inan. Email: ginan@metu.edu.tr} Mahbub~A.H.M.~Latif\textsuperscript{b} and John~Preisser\textsuperscript{c}}
\affil{\textsuperscript{a} Middle East Technical University, Ankara, Turkey;
\textsuperscript{b}Institute of Statistical Research and Training, University of Dhaka, Dhaka, Bangladesh;
\textsuperscript{c}Department of Biostatistics, University of North Carolina, Chapel Hill, U.S.A.}
}

\maketitle

\begin{abstract}
Generalized estimating equations (GEE) is one of the most commonly used methods for marginal regression analysis of longitudinal data, especially with discrete outcomes. The GEE method models the association among the responses of a subject through a working correlation matrix and correct specification of the working correlation structure ensures efficient estimation of the regression parameters. This study proposes a predicted residual sum of squares (PRESS) statistic as a working correlation selection criterion in GEE. An extensive simulation study is designed to assess the performance of the proposed GEE PRESS criterion and to compare its performance with well-known existing criteria in the literature. The results show that the GEE PRESS criterion has better performance than the weighted error sum of squares SC criterion in all cases and is comparable with that of other existing criteria when the true working correlation structure is AR(1) or exchangeable. Lastly, the working correlation selection criteria are illustrated with the Coronary Artery Risk Development in Young Adults study.
\end{abstract}

\begin{keywords}
Correlation structure; Deletion diagnostics; Longitudinal discrete responses; Press statistic; Unbalanced longitudinal data; Unequally spaced longitudinal data; Unstructured working correlation
\end{keywords}

\section{Introduction}
\label{sec:intro}

Longitudinal studies arise from repeated measurements on a given response for the same subjects over time. Longitudinal studies are conducted in different fields such as clinical trials, epidemiology, behavioural sciences, econometrics and so on. Generalized estimating equations (GEE) \citep{liang1986longitudinal} is one of the most commonly used methods for marginal analysis of longitudinal data, especially with discrete responses. In particular, GEE, which does not require specification of a joint distribution for the repeated measurements of a subject, estimates a generalized linear model for the marginal mean of each measurement of the subject. The method of GEE further models the correlation within the repeated measurements of the response of a subject through a working correlation matrix, such as first-order autoregressive (AR(1)), exchangeable (Exch), and unstructured (UN) correlation matrices.

Irrespective of whether or not the working correlation structure type is correctly specified, the GEE approach produces consistent estimates for regression parameters, with a correctly specified marginal mean model \citep{liang1986longitudinal}. However, correct specification of the working correlation structure ensures efficient estimation of the regression parameters within the class of linear unbiased estimating equations. For this reason, developing methods for working correlation structure selection in GEE analysis, conditional on the correctly specified marginal mean model, has been an active area of research and, in turn, several criteria for working correlation structure selection in GEE analysis have been proposed.
 
\citet{pan2001akaike} proposed the quasi-likelihood information criterion (QIC), an extension of the most commonly used model selection criterion for classical linear regression, the Akaike's information criterion. \citet{hin2009working} modified QIC to propose the correlation information criterion ({CIC}) and found that it improves correlation structure selection relative to the QIC.  QIC and  CIC contrast the empirical covariance matrix for the regression parameters to the model-based covariance matrix that assumes independence. \citet{rotnitzky1990hypothesis} introduced the criterion labeled {RJ} and \citet{gosho2011criterion} found that neither {RJ} criterion or {QIC} is superior to the other. \citet{shults1998analysis} proposed a criterion named {SC} for working correlation structure selection that minimizes the weighted residual sum of squares of the response vector. \citet{shults2009comparison} found that {RJ} criterion performed better than {SC} criterion in selecting the correlation structure in analyzing correlated binary responses.  \citet{gosho2011criterion} proposed a criterion that is a function that contrasts the unweighted sum of squares, or empirical covariance matrix of the response vector, with the model-based covariance of the responses. In their simulation study, their criterion was compared with CIC, QIC, RJ, and SC criteria and none was found best for all cases considered.  

The aim of this study is twofold. First, we propose a predicted residual sum of squares (PRESS) statistic for selection of working correlation structure in GEE analysis, where the particular form of the proposed PRESS criterion is similar to the SC criterion, with a different weight matrix. Afterwards, we design an extensive simulation study and compare the performance of the proposed PRESS criterion with several existing criteria such as {CIC}, {QIC}, {RJ}, and {SC} for balanced and unbalanced longitudinal data with discrete responses (e.g., binary and Poisson outcomes) under several working correlation structures such as Exch, AR(1), and UN.

The subsequent sections of the paper are organized as follows. In Section~\ref{sec:gee}, GEE analysis and existing working correlation structure selection criteria are briefly reviewed. In Section~\ref{sec:press}, a PRESS statistic for GEE is introduced and its connections to prediction error and cluster-deletion diagnostics are presented. Section~\ref{sec:sim} contains the simulation study evaluating the performance of the PRESS statistic relative to other criteria in selecting the working correlation structure for balanced and unbalanced longitudinal data with discrete responses in marginal regression models estimated with GEE. Section~\ref{sec:example} presents the application of PRESS statistic to analysis of smoking trends within the Coronary Artery Risk Development in Young Adults (CARDIA) study. Finally, Section~\ref{sec:conclusion} states limitations of the current work, and proposes future research.


\section{Overview of GEE and existing working correlation structure selection criteria in GEE}
\label{sec:gee}

In this section, we give a brief overview of the GEE method and several commonly used working correlation structure selection criteria in the literature. 

\subsection{Overview of GEE}

Suppose that there are {N} subjects ($i=1,\ldots,N$) in a longitudinal study and each subject is observed at times $t=1,\ldots,n_{i}$. For the $i$th subject, let $\mathbf{y}_i=(y_{i1}, \ldots, y_{in_i})'$ be the $n_i$-dimensional response vector and $\textbf{X}_i$ be the corresponding covariate matrix of order $n_i\times p$. Let the marginal mean response vector for the $i$th subject be given by $E(\mathbf{y}_i|\textbf{X}_i)=\bm{\mu}_i$ and $g(\bm{\mu}_i)={\textbf{X}}^{'}_i\bm{\beta}$, where $\bm{\beta}=(\beta_1, \ldots, \beta_p)'$ is vector of regression parameters and $g(\cdot)$ is the link function (e.g., \textit{logit} and $\log$ link functions are used for binary and Poisson responses, respectively). Assume that the $n_i \times n_i$-dimensional working covariance matrix of response vector for the $i$th subject is defined as:

\begin{equation}
\textbf{V}_i=\phi \textbf{A}_i^{1/2}\textbf{R}_i \textbf{A}_i^{1/2}, \nonumber
\end{equation}

\noindent
where $\phi$ is a positive-valued scale parameter and is equal to $1$ when there is no over-dispersion in the response,  $\textbf{A}_i=\text{diag}\{h(\mu_{i1}),\ldots,h(\mu_{in_i})\}$ is a diagonal matrix, $h(\cdot)$ is the variance function (e.g., for binary responses $h(\mu_{it})=\mu_{it}(1-\mu_{it})$  and for Poisson responses $h(\mu_{it})=\mu_{it}$), and $\textbf{R}_i=\textbf{R}_i(\bm{\alpha})$ is the working correlation matrix representing the association between the repeated measurements of the outcome for a subject, which is completely specified by the finite-dimensional parameter vector $\bm{\alpha}$.

The estimate of the regression parameters $\bm{\beta}$ can be obtained by solving the generalized estimating equations:
\begin{equation}
\sum_{i=1}^N \textbf{D}_i'\textbf{V}_i^{-1}(\mathbf{y}_i-\bm{\mu}_i)=\mathbf{0}, \nonumber
\end{equation}

\noindent
where $\textbf{D}_i=\partial\bm{\mu}_i/\partial\bm{\beta}$ is a matrix of order $n_i\times p$. Given the method of moment estimates of $\bm{\alpha}$ and $\phi$, the iterative Fisher-Scoring (F-S) algorithm for estimating $\boldsymbol{\beta}$ can be defined as follows:

\begin{equation}
\boldsymbol{\beta}_{r+1}=\boldsymbol{\beta}_{r}+\Big(\sum\limits_{i=1}^N \mathbf{D}_{i}^{'}\textbf{V}_{i}^{-1}\mathbf{D}_{i}\Big)^{-1} \Big(\sum\limits_{i=1}^N \mathbf{D}_{i}^{'}\textbf{V}_{i}^{-1}(\mathbf{y}_{i}-\boldsymbol{\mu}_{i})\Big), \nonumber
\end{equation} 

\noindent
where $r$ is the iteration number. Note that $\bm{\alpha}$ and $\phi$ are estimated through the method of moments using Pearson residuals repeatedly within the iterative F-S algorithm. For large $N$, the estimator $\hat{\bm{\beta}}$ follows an asymptotic $p$-variate normal distribution with mean vector $\bm{\beta}$ and covariance matrix: 

\begin{equation}
\text{Cov}(\hat{\bm{\beta}})=\textbf{M}^{-1}\textbf{J}\textbf{M}^{-1}, \nonumber
\end{equation} 

\noindent
where $\textbf{M}=\sum_{i=1}^N \textbf{M}_i$ with $\textbf{M}_i=\textbf{D}_i'\textbf{V}_i^{-1}\textbf{D}_i$ and  $\textbf{J}=\sum_{i=1}^N \textbf{D}_i'\textbf{V}_i^{-1}\text{Cov}(\mathbf{y}_i)\textbf{V}^{-1}_i \textbf{D}_i$.  The sandwich (which is also named as robust or empirical) covariance estimator for $\hat{\bm{\beta}}$ is obtained by replacing $\text{Cov}(\mathbf{y}_i)$ by $(\mathbf{y}_i-\bm{\mu}_i)(\mathbf{y}_i-\bm{\mu}_i)^{'}$ in $\textbf{J}$ such that:

\begin{equation}
\bm{\Sigma}=\textbf{M}^{-1}\bigg(\sum_{i=1}^N \textbf{D}_i'\textbf{V}_i^{-1}(\mathbf{y}_i-\bm{\mu}_i)(\mathbf{y}_i-\bm{\mu}_i)^{'}\textbf{V}^{-1}_i \textbf{D}_i \bigg)\textbf{M}^{-1}. \nonumber
\end{equation}

\noindent
It is well-known that the consistency and asymptotic normality of 
$\hat{\bm{\beta}}$ hold even if $\textbf{R}_i$ is mis-specified. However, when $\textbf{R}_i$ is correctly specified, then $\text{Cov}(\mathbf{y}_i)=\textbf{V}_i$ and $\bm{\Sigma}$ reduces to the model-based (which is also named as naive) covariance estimator {$\textbf{M}^{-1}$}. 

\subsection{Overview of existing GEE working correlation structure selection criteria}

The quasi-likelihood information criterion (QIC) \citep{pan2001akaike} can be defined as follows:

\begin{equation}
QIC=-2 QL(\hat{\bm{\beta}},\hat{\phi}) + 2 trace(\hat{\textbf{M}}_{I}\hat{\bm{\Sigma}}), \nonumber
\end{equation}

\noindent
where $QL(\hat{\bm{\beta}},\hat{\phi})$ is the quasi-likelihood function (e.g., it is $\hat{\phi}\sum_{i}^{N}\big(\textbf{y}_{i}\log(\hat{\bm{\mu}}_{i}/(1-\hat{\bm{\mu}}_{i})+\log(1-\hat{\bm{\mu}}_{i})\big)$ and $\hat{\phi}\sum_{i}^{N}\big(\textbf{y}_{i}\log(\hat{\bm{\mu}}_{i})-\hat{\bm{\mu}}_{i}\big)$ for binary and Poisson outcomes, respectively) and $\hat{\textbf{M}}_{I}=\sum_{i=1}^N\hat{\textbf{D}}_i'\hat{\textbf{A}}_i^{-1}\hat{\textbf{D}}_i$. The terms $QL(\hat{\bm{\beta}},\hat{\phi})$, $\hat{\textbf{M}}_{I}$, and
$\hat{\bm{\Sigma}}$ are evaluated at the values of 
$\hat{\bm{\beta}}$, $\hat{\bm{\alpha}}$, and $\hat{\phi}$ under the assumed working correlation matrix.

The correlation information criterion (CIC) \citep{hin2009working}, which uses the trace term of QIC only,  can be defined as follows:

\begin{equation}
CIC=trace(\hat{\textbf{M}}_{I}\hat{\bm{\Sigma}}), \nonumber
\end{equation}

\noindent
where $\hat{\textbf{M}}_{I}$ and
$\hat{\bm{\Sigma}}$ are evaluated at the values of 
$\hat{\bm{\beta}}$, $\hat{\bm{\alpha}}$, and $\hat{\phi}$ under the assumed working correlation matrix.

On the other hand, \citet{rotnitzky1990hypothesis} used the expression {$\textbf{Q}=\hat{\textbf{M}}\hat{\bm{\Sigma}}$} to define three criteria for selecting the true correlation structure:

\begin{eqnarray}
\begin{split}
\text{RJ1} &=trace(\textbf{Q})/p\\ \nonumber
\text{RJ2} &=trace(\textbf{Q}^2)/p, \quad \text{and} \\ \nonumber
\text{DBAR}&=\text{RJ2-2RJ1+1}, \nonumber
\end{split}
\end{eqnarray}

\noindent
where $p$ is the dimension of the regression parameter $\bm{\beta}$. The terms $\hat{\textbf{M}}$ and $\hat{\bm{\Sigma}}$ correspond to
$\textbf{M}$ and $\bm{\Sigma}$ evaluated at the values of $\hat{\bm{\beta}}$, $\hat{\bm{\alpha}}$, and $\hat{\phi}$ under the assumed working correlation matrix. 

Finally, the {SC} criterion \citep{shults1998analysis} can be expressed as follows:

\begin{equation}
\label{eq:sc}
SC=\sum_{i=1}^N (\mathbf{y}_i-\hat{\bm{\mu}}_i)^{'}\hat{\textbf{V}}_i^{-1}(\mathbf{y}_i-\hat{\bm{\mu}}_i), 
\end{equation}

\noindent
where $\hat{\bm{\mu}}_i$ and
$\hat{\textbf{V}}_i$ are evaluated at the values of 
$\hat{\bm{\beta}}$, $\hat{\bm{\alpha}}$, and $\hat{\phi}$ under the assumed working correlation matrix. Working correlation structure selection is based on the premise that smaller values of {QIC}, {CIC}, {SC}, {$\mid\text{RJ1-1}\mid$}, {$\mid\text{RJ2}-1\mid$}, and $\mid\text{DBAR}\mid$ lead to a better working correlation structure. 

\section{A PRESS statistic for working correlation selection in GEE}
\label{sec:press}

A second class of working correlation structure selection criterion in GEE analysis is based on the weighted error sum of squares $\sum_i\mathbf{e}_{i}'\textbf{W}_i\mathbf{e}_{i}$, where $\mathbf{e}_{i}=\mathbf{y}_i-\bm{\mu}_i$ and $\textbf{W}_i$ is a weight matrix.  \citet{shults1998analysis} defined $\textbf{W}_i=\textbf{V}_i^{-1}$ to propose the {SC} criterion in equation~\ref{eq:sc}, hypothesizing that the correct working correlation structure should minimize the error sums of squares weighted by the inverse of the working covariance structure. However, the reason for its poor performance relative to other criteria \citep{shults2009comparison} is because $\textbf{V}_i$  is a biased estimator of $\text{Cov}(\mathbf{e}_i)$; a better estimator is $\text{Cov}(\mathbf{e}_i)\approx (\textbf{I}-\textbf{H}_i)\textbf{V}_i(\textbf{I}-\textbf{H}_i')$,
where $\textbf{H}_i=\textbf{D}_i\textbf{M}^{-1}\textbf{D}_i'\textbf{V}_i^{-1}$.  The cluster-level leverage matrix $\textbf{H}_i$ has an important role in GEE inference appearing in formulae for bias-corrected covariance estimators for $\hat{\bm{\beta}}$ \citep{mancl2001covariance}, where {$(\textbf{I} - \textbf{H}_i)^{-1}\mathbf{e}_{i}\mathbf{e'}_{i}(\textbf{I} - \textbf{H}'_i)^{-1}$} replaces $\mathbf{e}_{i}\mathbf{e'}_{i}=(\mathbf{y}_i-\bm{\mu}_i)(\mathbf{y}_i-\bm{\mu}_i)^{'}$ in $\boldsymbol \Sigma$.  Furthermore, $\textbf{H}_i$ appears in GEE regression diagnostics that approximate the change in $\hat{\bm{\beta}}$ due to cluster-deletion \citep{hammill2006sas,preisser1996deletion}.

Using the weight matrix $\textbf{W}_i=(\textbf{I}-\textbf{H}_i')^{-1}\textbf{V}_i^{-1}(\textbf{I}-\textbf{H}_i)^{-1}$, a GEE PRESS Criterion (GPC) for working correlation structure selection, which approximates $\sum_i^{N}\hat{\mathbf{e}}_i'\{\widehat{\text{Cov}}(\hat{\mathbf{e}}_i)\}^{-1}\hat{\mathbf{e}}_i$, is given as follows:
\begin{align}\label{lpstat}
\text{GPC} &= \sum_{i=1}^N \hat{\mathbf{e}}_i'(\textbf{I}-\hat{\textbf{H}}_i')^{-1}\hat{\textbf{V}}_i^{-1}(\textbf{I}-\hat{\textbf{H}}_i)^{-1}\hat{\mathbf{e}}_i,
\end{align}
where $\hat{\mathbf{e}}_i=(\mathbf{y}_i-\hat{\bm{\mu}}_i)$, and $\hat{\bm{\mu}}_i$, $\hat{\textbf{H}}_i$, and $\hat{\textbf{V}}_i$ are evaluated at the values of $\hat{\bm{\beta}}$, $\hat{\bm{\alpha}}$, and $\hat{\phi}$ under the assumed working correlation matrix.

Appendix A provides further rationale for GPC as a working correlation selection criterion by showing that it also approximates $\sum_i^{N} \hat{\mathbf{e}}_{(i)}' \{\widehat{\text{Cov}}(\hat{\mathbf{e}}_{(i)})\}^{-1} \hat{\mathbf{e}}_{(i)}$, where
$\hat{\mathbf{e}}_{(i)}=\mathbf{y}_i-g^{-1}(\textbf{X}_i\hat{\bm{\beta}}_{(i)})$ is the corrected {PRESS} residual vector corresponding to the $i$th cluster with $\hat{\bm{\beta}}_{(i)}$ being the estimate of ${\bm{\beta}}$ that is estimated without the $i$th cluster, i.e.\ with $(N-1)$ clusters/subjects.  The GPC shares the attractive feature of GEE cluster-deletion diagnostics DBETA (defined in the appendix) in that they are computationally fast formula not requiring further iteration beyond convergence of the usual iterative reweighted least squares algorithm \citep{preisser2008note}. For independent observations (equivalently, $n_i=1$ for all $i$), the identity link (i.e., $L_i = 1$), and constant variance (i.e., $h(\mu_i)=1$), GPC reduces to the PRESS statistic for multiple linear regression $\sum_{i=1}^N (y_i - \hat{y}_{(i)})^2 = \sum_{i=1}^N e_i^2/(1-h_i)^2,$ where $h_i = \mathbf{X}'_i(\textbf{X}_i'\textbf{X}_i)^{-1}\mathbf{X}_i.$

\section{Simulation study}
\label{sec:sim}

In this section, we carry out an extensive simulation study to compare performance of the proposed {GPC} with several working correlation selection criteria ({CIC}, {DBAR}, {QIC}, {RJ1}, {RJ2}, and {SC}) in terms of proportion of selecting the true working correlation structure in longitudinal data under different scenarios and we discuss the results of the simulation study.

\subsection{Simulation design}

Specifically, we evaluate the criteria in four main simulation scenarios:
 
\begin{itemize}
\item[S1.] Longitudinal binary data with equal number of time points,
\item[S2.] Longitudinal binary data with unequal number of time points,
\item[S3.] Longitudinal count data with equal number of time points, and
\item[S4.] Longitudinal count data with unequal number of time points.
\end{itemize}

For scenarios $1$ and $2$, we assume that the response $y_{it}$, corresponding to the $t$th observation of the $i$th subject, follows a Bernoulli distribution with marginal mean $\mu_{it}$, which depends on the covariates $x_1$ and $x_2$ via a  logit link function according to the following model:
\begin{align}
\label{eq:binary}
\log \bigg(\frac{\mu_{it}}{1-\mu_{it}}\bigg) &= \beta_0 + \beta_1x_{1i} + \beta_2 x_{2it},\;\;i=1,\ldots,N,\;\;t=1,\ldots, n_{i},
\end{align}
where the covariates $x_{1i}$ and $x_{2it}$ are both binary, generated from $\text{Bernoulli}(0.5)$. In defining the mean function, $x_{1}$ is a subject-specific covariate, i.e.,\ it takes the same value for all the observations within a subject, and $x_2$ is a {time}-specific covariate, i.e.,\ it takes different values over the observations of a subject. In other words, $x_1$ and $x_2$ can be considered as time-independent and time-dependent covariates, respectively. True values of the regression parameters used in the simulation are $\beta_0=1$, $\beta_1=0.38$, and $\beta_2=0.35$. The true values of the regression coefficients are selected so that proportion of ones in the response variable is around $50\%$.

For scenarios $3$ and $4$, we assume that the response $y_{it}$, corresponding to the $t$th observation of the $i$th subject, follows a Poisson distribution with marginal mean $\mu_{it}$, which depends on the covariates $x_1$ and $x_2$ via a $\log$ link function according to the following model:
\begin{align}
\label{eq:Poisson}
\log \big(\mu_{it}\big) &= \beta_0 + \beta_1x_{1i} + \beta_2 x_{2it},\;\;i=1,\ldots,N,\;\;t=1,\ldots, n_{i},
\end{align}
where the covariates $x_{1i}$ and $x_{2it}$ are as defined in equation~\ref{eq:binary}. True values of the regression parameters used in the simulation are $\beta_0=1$, $\beta_1=0.20$, and $\beta_2=0.40$. The true values of the regression coefficients are selected so that
overall mean of counts is $3.5$. 

Under each scenario, two different sample sizes ($N=50$ and $100$) are investigated. For the scenarios with equal number of time points, $n_{i}$ is equal to $5$ for all subjects. For the scenarios with unequal number of time points, $n_{i}$ for each subject randomly takes a value from the set $\{3,4,5\}$ with corresponding probabilities $\{0.15,0.15,0.7\}$. This approach assumes that every subject has measurements observed at first three time points (e.g., $1$, $2$, and $3$) and $70\%$ of the subjects are fully observed at five time points,
$15\%$ of the subjects are fully observed at first four time points, and
$15\%$ of the subjects are fully observed at first three time points.
Furthermore, for each scenario, three different true working correlation structures, AR(1), Exch, and UN are investigated with two different degree of within-subject correlation values ($\alpha=0.2$ and $0.4$). For the UN correlation structure, the correlation between the observations at $t$ and $t'$ time points is defined as $\alpha^{{\mid t-t'\mid}^{\lambda}}$, which reduces to AR(1) and Exch structures for $\lambda=1$ and $\lambda=0$, respectively, and $\lambda=0.5$ is used for the UN structure. The over-dispersion parameter $\phi$ is set to $1$ for each case. Under each of $48$ different scenarios, $1000$ longitudinal replicate data sets are generated via {\verb"R"} \citep{rcore} package {\verb"PoisBinNonNor"} \citep{PoisBinNonNor}. Note that {\verb"R"} package {\verb"PoisBinNonNor"} also allows to check if there are range violations among pair-wise correlations of binary-binary variables and Poisson-Poisson variables given the true value of the marginal mean and correlation based on the methodology proposed in \citet{demirtas2011practical}.

For each of the true correlation structures considered (i.e.\ AR(1), Exch, and UN), the method of GEE is fitted with four different working correlation structures, namely, independence (Indep), AR(1), Exch, and UN via assuming there is no over-dispersion $\phi=1$ in the outcomes. User defined {R} codes, which can handle unequally spaced and unbalanced longitudinal data (automatically covers equally spaced unbalanced longitudinal data and balanced longitudinal data), are used to fit all the models considered in the simulation.

\subsection{Simulation results}

Under each of $48$ different scenarios, the proportion of times the true working correlation structure is selected by  {CIC}, {DBAR}, {GPC}, {QIC}, {RJ1}, {RJ2}, and {SC} criteria among the candidate working correlation structures Indep, AR(1), Exch, and UN are summarized in Tables \ref{tab1M1}-\ref{tab1M8}. The bold values in Tables \ref{tab1M1}-\ref{tab1M8} show the highest score within that true correlation structure and sample size combination. Furthermore, the results for the mean squared error values of parameters for each scenario are presented in Appendix B.

Overall, the pattern of results can be described according to three groups among 
the seven working correlation selection criteria defined in terms of their functional (mathematical) form: $\{CIC, QIC\}$, $\{DBAR,RJ1,RJ2\}$, and $\{GPC, SC\}$, respectively. The first main result is that {CIC} always selects the {UN} correlation structure as the best structure in all $48$ scenarios, with a value greater than $80\%$, where {QIC} does so $40\%-65\%$ of the time. Hence, regardless of the true correlation structure, these two criteria are likely to pick UN as the best structure which calls for caution when using them.

The other main result is that the performance of {GPC} is always better than that of {SC} in all $48$ scenarios indicating that use of the weight matrix $\textbf{W}_i=(\textbf{I}-\textbf{H}_i')^{-1}\textbf{V}_i^{-1}(\textbf{I}-\textbf{H}_i)^{-1}$ results in a better performance than use of $\textbf{V}_i^{-1}$.

Notably, when the true working correlation structure is AR(1), {GPC} always gives the highest score among all criteria (Tables \ref{tab1M1}-\ref{tab1M8}). When the within-subject association level increases from $\alpha=0.2$ to $\alpha=0.4$, proportion of correct selections by GPC increases (e.g., in  Table \ref{tab1M1} under AR(1) true working correlation structure and $N=50$, GPC has a value of $0.364$, whereas in  Table~\ref{tab1M2} under the same settings it is value is $0.421$). The performance of GPC also depends on the response type. GPC has slightly better performance for Poisson responses compared to binary responses (e.g., in  Table~\ref{tab1M4} under AR(1) true working correlation structure and $N=50$, GPC has a value of $0.407$, whereas in  Table~\ref{tab1M8} under the same settings it is value is $0.452$).

With respect to identifying the true Exch structure, GPC performs nearly as well as DBAR, RJ1, and RJ2 criteria. In fact, the performance of GPC exceeds $60\%$ in scenarios with Poisson responses, whereas the performance of RJ2 is around $70\%$ in the same cases (e.g., please see Tables \ref{tab1M5} and \ref{tab1M7}).

When the true working correlation structure is UN, drawing conclusions from the simulation results is not straightforward. As mentioned earlier, regardless of the true structure, the criteria {CIC} and {QIC} are very likely to select UN as the best structure. On the other hand, the RJ criteria more often correctly select the unstructured correlation matrix than GPC does. Indeed, the performance of GPC is between $5\%-25\%$. Actually, the results in Tables~\ref{tab1S1}-\ref{tab1S8} in the Appendix also show that even if the true correlation structure is UN, fitting a marginal model via GEE with an UN structure results in higher mean squared error values compared to a model with a simple correlation structure. This implies that the most preferred criteria are the ones with a tendency to pick correlation types with simple structures. This is because the inflation in the number of correlation parameters to be estimated in a model results in a loss of efficiency in regression parameter estimates. 


At this point we would like to point that authors of similar studies on selection of working correlation structure in marginal analysis of longitudinal data via GEE
either do not consider the UN structure as a candidate structure \citep{pardo2017working} or they specify a large number of candidate structures so that the influence of the UN structure on the simulation results is decreased \citep{jamanworking2015}. 

To assess the effect of omitting UN as a candidate working correlation structure, we present limited simulation results showing the proportion of times the criteria select the true correlation structure among Indep, AR(1), and Exch (excluding UN) structures, for balanced longitudinal binary data with different sample sizes when the true within-subject correlation level is $\alpha=0.2$. The results in Table \ref{tab1M9} show that {CIC} is very good at selecting the true correlation structure and GPC performs nearly as well as CIC (and better in the Exch and $N=50$ scenario) while incorrectly choosing the Indep structure less often than CIC.

\section{Illustration of working correlation structure selection with the coronary artery risk development in young adults (CARDIA) study}
\label{sec:example}

In this section, binary outcomes for smoking status in a longitudinal cohort of $5,077$ subjects in the Coronary Artery Risk Development in Young Adults (CARDIA) study are considered to illustrate the choice of working correlation structure selection in marginal analysis of longitudinal data with GEE. In the CARDIA study, the cigarette smoking status of participants (yes, no) was ascertained at six study visits beginning in $1986$ (year $0$) and continuing in follow-up years $2$, $5$, $7$, $10$, and $15$ (resulting in unequally spaced longitudinal binary data). The age at baseline (ranges between $18-30$ years), attained education status by the end of study (no college attended, some college, college degree), sex (female, male), and race (black, white) information were also gathered from each participant. 

\citet{perin2009wgee} analyzed the CARDIA study data with GEE where smoking status was the response variable with follow-up year treated as a categorical variable and a working correlation structure with an Exch type was assumed. The GEE analysis of these data indicated that smoking rates significantly declined from baseline (year $0$) to year $15$ with greater declines for females than males and for whites compared to blacks. The choice of working correlation structure was based on exploratory analysis rather than any objective criterion. In particular, the Exch type of correlation structure appeared to adequately characterize the within-subject correlation of smoking status over time for CARDIA study participants, where smoking initiation and quitting occur infrequently.


In this study, we re-visit the $15$-year CARDIA study data using cubic polynomial models for time with adjustment for participant age, age-squared, and attained education status. Several GEE with different working correlation structures for smoking status (yes, no) with a logit link function are applied to the longitudinal binary data on four subgroups of sex and race, i.e., black females ($N=1473, \bar{n}= 4.90$), black males ($N=1145, \bar{n}= 4.66$), white females ($N=1299, \bar{n}= 5.29$), and white males ($N=1160, \bar{n}= 5.29$), respectively, with $N$ denoting the number of participants in the subgroup and $\bar{n}$ denoting the average number of visits per participant in the subgroup (i.e., where $n_{i}$ is between $1-6$ for each participant). 

Specifically, let the response $y_{it}=1$ if the $i$th young adult at the $t$th time point was a smoker and 0 otherwise and $\mu_{it}=E[y_{it}]$, the probability that the $i$th young adult at the $j$th time point was a smoker ($i=1,\ldots,N$, $t=1,\ldots, n_i$), can be described by the following model:

\begin{equation}
\label{eq:model2}
\log \bigg(\frac{\mu_{it}}{1-\mu_{it}}\bigg) = \beta_0 + \beta_1x_{1i} + \beta_2 x_{2it}
+ \beta_3x_{3i} + \beta_4 x_{4i}
+ \beta_L year_{t} + \beta_Q year^2_{t} 
+  \beta_Cyear^3_{t},
\end{equation}

\noindent where $x_{1i}$ = age in years divided by 10, $x_{2i}=x_{1i}^2$, $x_{3i} = 1$ if attained education is some college without degree and 0 otherwise, $x_{4i} = 1$ if attained education is college degree and 0 otherwise, and $year_{t}$ is years since the first study exam in 1986 divided by 10.  As a primary question of interest, we investigate whether smoking rates among young adults changed over time.  Secondarily, we examine whether the smoking rate at the first study exam ($year_{1}=0$) differs from the smoking rate fifteen years later ($year_{6}=1.5$ decades).

Table~\ref{tab:tab2M3} shows the values of the various working correlation selection criteria for each race and sex group. The GPC  identifies the AR(1) working correlation as the best structure for black females, black males, and white males groups and Exch working correlation as the best structure for white females group. SC also agrees with GPC on all cases. Conversely, the CIC, DBAR, RJ1, and RJ2 criteria almost always identify the unstructured correlation as the best structure for all groups.   

Application of GEE with the correlation structures suggested by GPC criterion are presented in Table \ref{tab3}. The results indicate a strong effect of education where smoking rates are significantly greatest among those without college attendance compared to those with at least some college and those with a college degree. For example, while the odds of smoking for a black male without college attendance is $1/exp(-0.964)=2.622$ times higher than that for a black male with a some college attendance, it is $1/exp(-1.787)= 5.972$ times higher compared to that for a black male with a college degree. Based on the results in Table \ref{tab4}, the empirical score test for trend is significant for every group (e.g., P-values are less than $0.001$, $0.05$, $0.001$, and $0.001$, respectively) suggesting that the smoking rate changes over time. However, when the year 1986 (year $0$) smoking rate is compared to the year 2001 (year $1.5$) smoking rate, the difference in rates is not statistically significant for black males (e.g., P-value is $0.580$).  

Fig.~\ref{fig:fig1} shows the smoking trends based on the model applied to the four race and sex subgroups separately when using an AR(1) correlation structure for black females, black males, and white males and an Exch correlation for white females. Fig.~\ref{fig:fig1} shows model-adjusted trends in smoking for each of the three education groups for each race/sex group of young adults and indicates that smoking rates are higher in participants with high school education or less and is much more pronounced in black participants compared to white participants.  

\section{Conclusion}
\label{sec:conclusion}

This study proposed and evaluated a PRESS statistic for selection of the working correlation structure in marginal analysis of longitudinal data by generalized estimating equations. The simulation results showed that the proposed PRESS statistic has better performance than the SC criterion in all cases, is better than several other existing criteria when the true working correlation structure is AR(1), and is comparable with other criteria when the true working correlation structure is exchangeable. However, the simulation studies showed that GPC performs somewhat poorly with respect to identifying an unstructured correlation structure. This is not necessarily an undesirable result as marginal models with a UN structure fitted with GEE results in higher mean squared error compared to a model with a simpler correlation structure. 

Correctly discriminating between an exchangeable (or AR(1)) versus UN correlation matrices is challenging owing to the number of correlation parameters to be estimated in the UN case. Indeed, \citet{Westgate2013,Westgate2016} showed the estimation of a large number of correlation parameters in the UN structure leads to an increase in the sampling variances of the regression parameters. \citet{Westgate2014} proposed a penalization approach to the estimation of the sandwich covariance matrix of the regression parameters in the UN structure which results in higher selection accuracy. Westgate's penalty is applicable to any working correlation selection criteria involving the sandwich covariance estimator of the regression parameters. Since our proposed criterion does not involve the sandwich variance estimator of the regression parameters, the penalty cannot be applied to the GPC criterion. Nonetheless, the GPC criterion needs further development to result in better accuracy in selection of an UN correlation structure.

We focused on selection of the working correlation matrix under the assumption that the marginal mean model is correctly specified. Future investigations of the GPC statistic may include its evaluation for selection of covariates in the linear predictor when correct selection of the correlation matrix is assumed in a GEE analysis. A special case of the GPC statistic for independent responses warrants investigation for variable selection in generalized linear models. For the regression analysis of clustered data with marginal models using GEE, a generalized version of Mallow's $C_p$ was shown to perform well relative to variable selection based on Wald and score tests \citep{cantoni2005GEE}. Other problems that could be addressed include the simultaneous selection of working covariance matrix and the linear predictor in GEE.

As a final note, we should also note that {\verb"R"} codes used for fitting marginal models through GEE in this study are adapted from the codes of the {\verb"R"} package {\verb"PGEE"} \citep{PGEE} which implements penalized generalized estimating equations for analysis of longitudinal data with large number of covariates. The {\verb"R"} codes used in this study are extended for the analysis of unequally spaced unbalanced longitudinal data via the method of GEE. 

Finally, we would like to draw the attention of authors using {\verb"R"} package {\verb"geepack"} \citep{geepack}, {\verb"R"} package {\verb"gee"} \citep{gee}, and {\verb"SAS"} {\verb"GENMOD"} procedure \citep{sas} in working correlation structure selection problems in GEE to the following remarks:  1) Our small-scale simulation studies show that {\verb"R"} package {\verb"geepack"}, which uses another set of estimating equations for correlation parameters, estimates the correlation parameter under AR(1) correlation structure with a considerable bias; 2) The {\verb"R"} package {\verb"gee"} reports an error under AR(1) correlation structure when one of the subjects has only one time point, although it is possible to estimate the correlation parameter; 3) The {\verb"R"} package {\verb"gee"} cannot handle unequally spaced longitudinal data (whether it is balanced or unbalanced) under AR(1) and UN correlation structures since there is not any input argument taking the spacing between time points as in the {waves} argument in {\verb"R"} package {\verb"geepack"} and {withinsubject} option in {repeated} statement of {\verb"SAS"} {\verb"GENMOD"} procedure; and 4) The {\verb"R"} package {\verb"gee"} and {\verb"SAS"} {\verb"GENMOD"} procedure estimate the over-dispersion parameter, $\phi$, even if it is set to $1$. 


\section*{Acknowledgment}
This study was presented at the 38th Annual Conference of the International Society for Clinical Biostatistics (ISCB) in Vigo, Spain in 2017 through a travel grant programme of the ISCB conference.

\newpage
\clearpage
\section*{Tables and Figures}

\begin{table}[ht]
\centering
\captionsetup{width=0.75\textwidth}
\caption{Proportion of selecting the true correlation structure by the competing selection criteria for balanced longitudinal binary data with different sample sizes when true within-subject correlation level is $\alpha=0.2$.}
\smallskip
\label{tab1M1}
\scalebox{0.7}{
\begin{tabular}{@{}llcccccccccc@{}} \hline
True&&\multicolumn{4}{c}{$N=50$} && \multicolumn{4}{c}{$N=100$}\\
\cmidrule{3-6}\cmidrule{8-11}
correlation &Selection&\multicolumn{4}{c}{Working corr. structure}&& \multicolumn{4}{c}{Working corr. structure}\\
\cmidrule{3-6}\cmidrule{8-11}
structure &criteria           &{Indep} & {AR(1)} & {Exch}  &  {UN}  && {Indep} & {AR(1)} & {Exch}  &  {UN} \\
\midrule
            &$\text{CIC}$       & 0.028 & 0.089 & 0.028 & 0.859 && 0.017 & 0.062 & 0.025 & 0.90\\
            &$\text{DBAR}$      & 0.030 & 0.181 & 0.366 & 0.474 && 0.006 & 0.165 & 0.421 & 0.500 \\
            &$\text{GPC}$        & 0.040 & \textbf{0.364} & 0.397 & 0.199 && 0.031 & \textbf{0.324} & 0.417 & 0.228 \\
{AR(1)}     &$\text{QIC}$       & 0.115 & 0.238 & 0.108 & 0.547 && 0.105 & 0.225 & 0.105 & 0.570  \\
            &$|\text{RJ1}-1|$   & 0.087 & 0.240 & 0.393 & 0.309 && 0.051 & 0.269 & 0.343 & 0.361\\
            &$|\text{RJ2}-1|$   & 0.050 & 0.270 & 0.391 & 0.295 && 0.027 & 0.277 & 0.343 & 0.363\\
            &$\text{SC}$        & 0.259 & 0.327 & 0.213 & 0.203 && 0.268 & 0.298 & 0.209 & 0.227\\ \hline
                    
          &$\text{CIC}$         & 0.018 & 0.038 & 0.076 & 0.868 && 0.013 & 0.026 & 0.043 & 0.924\\
          &$\text{DBAR}$        & 0.002 & 0.026 & 0.541 & 0.452 && 0.000 & 0.002 & 0.555 & 0.496  \\    
          &$\text{GPC}$          & 0.028 & 0.218 & 0.582 & 0.172 && 0.038 & 0.314 & 0.489 & 0.159\\
{Exch}    &$\text{QIC}$         & 0.156 & 0.109 & 0.218 & 0.519 && 0.126 & 0.076 & 0.212 & 0.588 \\
          &$|\text{RJ1}-1|$     & 0.014 & 0.099 & 0.535 & 0.360 && 0.000 & 0.028 & 0.544 & 0.450\\
          &$|\text{RJ2}-1|$     & 0.007 & 0.048 & \textbf{0.585} & 0.372 && 0.000 & 0.010 & \textbf{0.566} & 0.432\\ 
          &$\text{SC}$          & 0.305 & 0.335 & 0.209 & 0.152 && 0.250 & 0.386 & 0.228 & 0.136  \\ \hline 
                     
          &$\text{CIC}$         & 0.029 & 0.077 & 0.042 & \textbf{0.856} && 0.017 & 0.050 & 0.017 & \textbf{0.918}   \\ 
          &$\text{DBAR}$        & 0.008 & 0.120 & 0.443 & 0.471 && 0.003 & 0.089 & 0.492 & 0.481   \\      
          &$\text{GPC}$          & 0.034 & 0.342 & 0.447 & 0.178 && 0.046 & 0.327 & 0.411 & 0.216 \\
{UN}      &$\text{QIC}$         & 0.152 & 0.188 & 0.138 & 0.526 && 0.121 & 0.204 & 0.125 & 0.556 \\
          &$|\text{RJ1}-1|$     & 0.049 & 0.258 & 0.398 & 0.308 && 0.011 & 0.231 & 0.411 & 0.375    \\
          &$|\text{RJ2}-1|$     & 0.025 & 0.222 & 0.467 & 0.296 && 0.007 & 0.171 & 0.454 & 0.376     \\
          &$\text{SC}$          & 0.289 & 0.348 & 0.192 & 0.172 && 0.274 & 0.320 & 0.195 & 0.212\\ \hline                      
\end{tabular}}
\end{table}

\begin{table}[ht]
\centering
\captionsetup{width=0.75\textwidth}
\caption{Proportion of selecting the true correlation structure by the competing selection criteria for balanced longitudinal binary data with different sample sizes when true within-subject correlation level is $\alpha=0.4$.}
\smallskip
\label{tab1M2}
\scalebox{0.7}{
\begin{tabular}{@{}llcccccccccc@{}} \hline
True&&\multicolumn{4}{c}{$N=50$} && \multicolumn{4}{c}{$N=100$}\\
\cline{3-6}\cline{8-11}
correlation &Selection&\multicolumn{4}{c}{Working corr. structure}&& \multicolumn{4}{c}{Working corr. structure}\\
\cline{3-6}\cline{8-11}
structure &criteria    &{Indep} & {AR(1)} & {Exch}  &  {UN}  && {Indep} & {AR(1)} & {Exch}  &  {UN} \\
\hline
                    &$\text{CIC}$       & 0.009 & 0.209 & 0.027 & 0.757 && 0.002 & 0.134 & 0.007 & 0.861 \\ 
                    &$\text{DBAR}$      & 0.002 & 0.234 & 0.471 & 0.317 && 0.000 & 0.205 & 0.468 & 0.375   \\      
                    &$\text{GPC}$        & 0.030 & \textbf{0.421} & 0.448 & 0.102 && 0.039 & \textbf{0.412} & 0.446 & 0.104 \\
{AR(1)}             &$\text{QIC}$       & 0.105 & 0.330 & 0.111 & 0.457 && 0.068 & 0.320 & 0.119 & 0.495\\
                    &$|\text{RJ1}-1|$   & 0.009 & 0.299 & 0.452 & 0.250 && 0.000 & 0.274 & 0.414 & 0.332    \\
                    &$|\text{RJ2}-1|$   & 0.001 & 0.292 & 0.469 & 0.245 && 0.000 & 0.278 & 0.437 & 0.295     \\
                    &$\text{SC}$        & 0.318 & 0.387 & 0.195 & 0.100 && 0.284 & 0.389 & 0.223 & 0.104\\ \hline 
                    
                    &$\text{CIC}$       & 0.021 & 0.053 & 0.180 & 0.750 && 0.003 & 0.019 & 0.105 & 0.874    \\ 
                    &$\text{DBAR}$      & 0.000 & 0.011 & 0.663 & 0.338 && 0.000 & 0.000 & \textbf{0.651} & 0.386    \\      
                    &$\text{GPC}$        & 0.041 & 0.309 & 0.558 & 0.092 && 0.057 & 0.366 & 0.515 & 0.062\\
{Exch}              &$\text{QIC}$       & 0.150 & 0.086 & 0.331 & 0.438 && 0.158 & 0.044 & 0.327 & 0.475\\
                    &$|\text{RJ1}-1|$   & 0.000 & 0.058 & 0.623 & 0.320 && 0.000 & 0.011 & 0.601 & 0.405    \\
                    &$|\text{RJ2}-1|$   & 0.000 & 0.013 & \textbf{0.703} & 0.285 && 0.000& 0.001 & 0.640 & 0.361    \\
                    &$\text{SC}$        & 0.356 & 0.343 & 0.219 & 0.082 && 0.361 & 0.361 & 0.228 & 0.050\\ \hline 
                   
                    &$\text{CIC}$       & 0.016 & 0.140 & 0.071 & \textbf{0.776} && 0.002 & 0.101 & 0.038 & \textbf{0.861}   \\ 
                    &$\text{DBAR}$      & 0.000 & 0.094 & 0.583 & 0.342 && 0.000 & 0.049 & 0.597 & 0.387   \\      
                    &$\text{GPC}$        & 0.025 & 0.385 & 0.500 & 0.090 && 0.055 & 0.374 & 0.501 & 0.070 \\
{UN}                &$\text{QIC}$       & 0.130 & 0.232 & 0.181 & 0.460 && 0.094 & 0.210 & 0.203 & 0.493 \\
                    &$|\text{RJ1}-1|$   & 0.001 & 0.215 & 0.513 & 0.279 && 0.000 & 0.145 & 0.526 & 0.344   \\
                    &$|\text{RJ2}-1|$   & 0.000 & 0.144 & 0.599 & 0.259 && 0.000 & 0.099 & 0.585 & 0.326    \\
                    &$\text{SC}$        & 0.322 & 0.379 & 0.217 & 0.082 && 0.334 & 0.368 & 0.238 & 0.060\\ \hline                      
\end{tabular}}
\end{table}

\begin{table}[ht]
\centering
\captionsetup{width=0.75\textwidth}
\caption{Proportion of selecting the true correlation structure by the competing selection criteria for unbalanced longitudinal binary data with different sample sizes when true within-subject correlation level is $\alpha=0.2$.}
\smallskip
\label{tab1M3}
\scalebox{0.7}{
\begin{tabular}{@{}llcccccccccc@{}} \hline
True&&\multicolumn{4}{c}{$N=50$} && \multicolumn{4}{c}{$N=100$}\\
\cmidrule{3-6}\cmidrule{8-11}
correlation &Selection&\multicolumn{4}{c}{Working corr. structure}&& \multicolumn{4}{c}{Working corr. structure}\\
\cmidrule{3-6}\cmidrule{8-11}
structure &criteria    &{Indep} & {AR(1)} & {Exch}  &  {UN}  && {Indep} & {AR(1)} & {Exch}  &  {UN} \\
\midrule
                    &$\text{CIC}$       & 0.040 & 0.095 & 0.043 & 0.827 && 0.012 & 0.069 & 0.017 & 0.904  \\ 
                    &$\text{DBAR}$      & 0.032 & 0.181 & 0.398 & 0.434 && 0.007 & 0.169 & 0.440 & 0.477  \\      
                    &$\text{GPC}$        & 0.087 & \textbf{0.355} & 0.346 & 0.214 && 0.123 & \textbf{0.317} & 0.365 & 0.195  \\
{AR(1)}             &$\text{QIC}$       & 0.127 & 0.249 & 0.104 & 0.524 && 0.101 & 0.218 & 0.109 & 0.575  \\
                    &$|\text{RJ1}-1|$   & 0.099 & 0.243 & 0.398 & 0.278 && 0.048 & 0.248 & 0.386 & 0.336    \\
                    &$|\text{RJ2}-1|$   & 0.060 & 0.288 & 0.413 & 0.247 && 0.033 & 0.248 & 0.416 & 0.314     \\
                    &$\text{SC}$        & 0.237 & 0.287 & 0.250 & 0.226 && 0.269 & 0.264 & 0.271 & 0.196 \\ \hline 
                    
                    &$\text{CIC}$       & 0.039 & 0.047 & 0.117 & 0.798 && 0.011 & 0.023 & 0.066 & 0.902  \\ 
                    &$\text{DBAR}$      & 0.005 & 0.032 & 0.558 & 0.429 && 0.000 & 0.005 & 0.538 & 0.493  \\      
                    &$\text{GPC}$        & 0.092 & 0.261 & 0.493 & 0.155 && 0.145 & 0.306 & 0.416 & 0.134  \\
{Exch}              &$\text{QIC}$       & 0.167 & 0.118 & 0.238 & 0.481 && 0.114 & 0.113 & 0.231 & 0.549  \\
                    &$|\text{RJ1}-1|$   & 0.015 & 0.134 & 0.544 & 0.316 && 0.001 & 0.056 & 0.526 & 0.448   \\
                    &$|\text{RJ2}-1|$   & 0.011 & 0.073 & \textbf{0.587} & 0.331 && 0.000 & 0.014 & \textbf{0.581} & 0.414   \\
                    &$\text{SC}$        & 0.297 & 0.258 & 0.296 & 0.149 && 0.317 & 0.281 & 0.287 & 0.115\\ \hline 
                     
                    &$\text{CIC}$       & 0.043 & 0.094 & 0.039 & \textbf{0.828} && 0.018 & 0.035 & 0.026 & \textbf{0.923}  \\ 
                    &$\text{DBAR}$      & 0.012 & 0.148 & 0.441 & 0.438 && 0.001 & 0.098 & 0.487 & 0.483  \\      
                    &$\text{GPC}$        & 0.095 & 0.306 & 0.403 & 0.196 && 0.128 & 0.320 & 0.366 & 0.186  \\
{UN}                &$\text{QIC}$       & 0.164 & 0.198 & 0.133 & 0.507 && 0.129 & 0.161 & 0.123 & 0.591  \\
                    &$|\text{RJ1}-1|$   & 0.066 & 0.233 & 0.414 & 0.300 && 0.020 & 0.228 & 0.418 & 0.363  \\
                    &$|\text{RJ2}-1|$   & 0.032 & 0.206 & 0.471 & 0.297 && 0.008 & 0.189 & 0.465 & 0.353   \\
                    &$\text{SC}$        & 0.283 & 0.273 & 0.252 & 0.192 && 0.280 & 0.284 & 0.258 & 0.179 \\ \hline                      
\end{tabular}}
\end{table}

\begin{table}[ht]
\centering
\captionsetup{width=0.75\textwidth}
\caption{Proportion of selecting the true correlation structure by the competing selection criteria for unbalanced longitudinal binary data with different sample sizes when true within-subject correlation level is $\alpha=0.4$.}
\smallskip
\label{tab1M4}
\scalebox{0.7}{
\begin{tabular}{@{}llcccccccccc@{}} \hline
True&&\multicolumn{4}{c}{$N=50$} && \multicolumn{4}{c}{$N=100$}\\
\cline{3-6}\cline{8-11}
correlation &Selection&\multicolumn{4}{c}{Working correlation structure}&& \multicolumn{4}{c}{Working correlation structure}\\
\cline{3-6}\cline{8-11}
structure &criteria    &{Indep} & {AR(1)} & {Exch}  &  {UN}  && {Indep} & {AR(1)} & {Exch}  &  {UN} \\
\hline
                    &$\text{CIC}$       & 0.015 & 0.235 & 0.047 & 0.708 && 0.002 & 0.152 & 0.017 & 0.831    \\ 
                    &$\text{DBAR}$      & 0.001 & 0.262 & 0.432 & 0.332 && 0.000 & 0.247 & 0.480 & 0.324   \\      
                    &$\text{GPC}$        & 0.092 & \textbf{0.407} & 0.395 & 0.106 && 0.151 & \textbf{0.363} & 0.407 & 0.079 \\
{AR(1)}             &$\text{QIC}$       & 0.151 & 0.345 & 0.101 & 0.406 && 0.109 & 0.325 & 0.109 & 0.460 \\
                    &$|\text{RJ1}-1|$   & 0.013 & 0.270 & 0.482 & 0.245 && 0.000 & 0.291 & 0.444 & 0.284    \\
                    &$|\text{RJ2}-1|$   & 0.007 & 0.285 & 0.482 & 0.230 && 0.000 & 0.309 & 0.436 & 0.260     \\
                    &$\text{SC}$        & 0.269 & 0.342 & 0.279 & 0.110 && 0.282 & 0.330 & 0.313 & 0.075\\ \hline 
                    
                    &$\text{CIC}$       & 0.024 & 0.059 & 0.183 & 0.736 && 0.007 & 0.023 & 0.185 & 0.788    \\ 
                    &$\text{DBAR}$      & 0.000 & 0.024 & 0.674 & 0.309 && 0.000 & 0.002 & 0.663 & 0.370   \\      
                    &$\text{GPC}$        & 0.109 & 0.264 & 0.546 & 0.081 && 0.166 & 0.311 & 0.467 & 0.056 \\
{Exch}              &$\text{QIC}$       & 0.176 & 0.127 & 0.293 & 0.404 && 0.153 & 0.080 & 0.368 & 0.400  \\
                    &$|\text{RJ1}-1|$   & 0.000 & 0.093 & 0.646 & 0.270 && 0.000 & 0.013 & 0.634 & 0.365    \\
                    &$|\text{RJ2}-1|$   & 0.000 & 0.027 & \textbf{0.719} & 0.259 && 0.000 & 0.002 & \textbf{0.670} & 0.332    \\
                    &$\text{SC}$        & 0.287 & 0.280 & 0.364 & 0.070 && 0.340 & 0.264 & 0.344 & 0.052\\ \hline 
                  
                    &$\text{CIC}$       & 0.020 & 0.160 & 0.081 & \textbf{0.741} && 0.007 & 0.105 & 0.057 & \textbf{0.836}   \\ 
                    &$\text{DBAR}$      & 0.000 & 0.130 & 0.556 & 0.331 && 0.000 & 0.087 & 0.563 & 0.386   \\      
                    &$\text{GPC}$        & 0.104 & 0.342 & 0.458 & 0.096 && 0.162 & 0.333 & 0.420 & 0.086 \\
{UN}                &$\text{QIC}$       & 0.155 & 0.244 & 0.165 & 0.438 && 0.122 & 0.225 & 0.193 & 0.462 \\
                    &$|\text{RJ1}-1|$   & 0.004 & 0.245 & 0.507 & 0.258 && 0.000 & 0.171 & 0.524 & 0.324     \\
                    &$|\text{RJ2}-1|$   & 0.000 & 0.174 & 0.589 & 0.244 && 0.000 & 0.103 & 0.586 & 0.319     \\
                    &$\text{SC}$        & 0.298 & 0.310 & 0.310 & 0.082 && 0.330 & 0.292 & 0.304 & 0.074\\ \hline                      
\end{tabular}}
\end{table}

\begin{table}[ht]
\centering
\captionsetup{width=0.75\textwidth}
\caption{Proportion of selecting the true correlation structure by the competing selection criteria for balanced longitudinal count data with different sample sizes when true within-subject correlation level is $\alpha=0.2$.}
\smallskip
\label{tab1M5}
\scalebox{0.7}{
\begin{tabular}{@{}llcccccccccc@{}} \hline
True&&\multicolumn{4}{c}{$N=50$} && \multicolumn{4}{c}{$N=100$}\\
\cmidrule{3-6}\cmidrule{8-11}
correlation &Selection&\multicolumn{4}{c}{Working corr. structure}&& \multicolumn{4}{c}{Working corr. structure}\\
\cmidrule{3-6}\cmidrule{8-11}
structure &criteria    &{Indep} & {AR(1)} & {Exch}  &  {UN}  && {Indep} & {AR(1)} & {Exch}  &  {UN} \\
\midrule
                    &$\text{CIC}$       &  0.020 & 0.086 & 0.034 & 0.863 && 0.018 & 0.073 & 0.016 & 0.895\\ 
                    &$\text{DBAR}$      &  0.072 & 0.213 & 0.392 & 0.369 && 0.034 & 0.210 & 0.424 & 0.399\\      
                    &$\text{GPC}$        &  0.042 & \textbf{0.358} & 0.422 & 0.179 && 0.061 & \textbf{0.354} & 0.402 & 0.183\\
{AR(1)}             &$\text{QIC}$       &  0.137 & 0.222 & 0.085 & 0.558 && 0.118 & 0.193 & 0.080 & 0.616\\
                    &$|\text{RJ1}-1|$   &  0.226 & 0.263 & 0.320 & 0.201 && 0.119 & 0.311 & 0.331 & 0.251\\
                    &$|\text{RJ2}-1|$   &  0.159 & 0.306 & 0.337 & 0.204 && 0.085 & 0.309 & 0.357 & 0.261\\
                    &$\text{SC}$        &  0.348 & 0.281 & 0.187 & 0.186 && 0.386 & 0.276 & 0.153 & 0.185\\ \hline 
                    
                    &$\text{CIC}$       &  0.022 & 0.050 & 0.059 & 0.871 && 0.008 & 0.018 & 0.036 & 0.941  \\ 
                    &$\text{DBAR}$      &  0.006 & 0.065 & 0.550 & 0.409 && 0.000 & 0.014 & \textbf{0.561} & 0.487\\      
                    &$\text{GPC}$        &  0.087 & 0.136 & \textbf{0.610} & 0.167 && 0.155 & 0.143 & 0.543 & 0.159\\
{Exch}              &$\text{QIC}$       &  0.138 & 0.100 & 0.158 & 0.607 && 0.117 & 0.076 & 0.151 & 0.659\\
                    &$|\text{RJ1}-1|$   &  0.048 & 0.217 & 0.481 & 0.265 && 0.003 & 0.124 & 0.537 & 0.361   \\
                    &$|\text{RJ2}-1|$   &  0.025 & 0.153 & 0.556 & 0.268 && 0.002 & 0.070 & 0.556 & 0.384   \\
                    &$\text{SC}$        &  0.347 & 0.209 & 0.290 & 0.155 && 0.400 & 0.176 & 0.289 & 0.136\\ \hline 
                    
                    &$\text{CIC}$       &  0.033 & 0.073 & 0.039 & \textbf{0.856} && 0.017 & 0.056 & 0.041 & \textbf{0.887}   \\ 
                    &$\text{DBAR}$      &  0.028 & 0.180 & 0.435 & 0.394 && 0.012 & 0.147 & 0.449 & 0.46  \\      
                    &$\text{GPC}$        &  0.044 & 0.247 & 0.524 & 0.185 && 0.108 & 0.235 & 0.437 & 0.220\\
{UN}                &$\text{QIC}$       &  0.140 & 0.176 & 0.127 & 0.558 && 0.113 & 0.158 & 0.134 & 0.599\\
                    &$|\text{RJ1}-1|$   &  0.122 & 0.267 & 0.376 & 0.246 && 0.060 & 0.319 & 0.366 & 0.280   \\
                    &$|\text{RJ2}-1|$   &  0.088 & 0.241 & 0.425 & 0.252 && 0.039 & 0.277 & 0.390 & 0.303   \\
                    &$\text{SC}$        &  0.367 & 0.268 & 0.199 & 0.167 && 0.378 & 0.253 & 0.169 & 0.200\\ \hline                      
\end{tabular}}
\end{table}

\begin{table}[ht]
\centering
\captionsetup{width=0.75\textwidth}
\caption{Proportion of selecting the true correlation structure by the competing selection criteria for balanced longitudinal count data with different sample sizes when true within-subject correlation level is $\alpha=0.4$.}
\smallskip
\label{tab1M6}
\scalebox{0.7}{
\begin{tabular}{@{}llcccccccccc@{}} \hline
True&&\multicolumn{4}{c}{$N=50$} && \multicolumn{4}{c}{$N=100$}\\
\cline{3-6}\cline{8-11}
correlation &Selection&\multicolumn{4}{c}{Working corr. structure}&& \multicolumn{4}{c}{Working corr. structure}\\
\cline{3-6}\cline{8-11}
structure &criteria    &{Indep} & {AR(1)} & {Exch}  &  {UN}  && {Indep} & {AR(1)} & {Exch}  &  {UN} \\
\hline
                    &$\text{CIC}$       & 0.010 & 0.194 & 0.026 & 0.773 && 0.003 & 0.105 & 0.017 & 0.878   \\ 
                    &$\text{DBAR}$      & 0.008 & 0.241 & 0.481 & 0.288 && 0.000 & 0.220 & 0.499 & 0.346   \\      
                    &$\text{GPC}$        & 0.106 & \textbf{0.452} & 0.364 & 0.078 && 0.188 & \textbf{0.428} & 0.315 & 0.069 \\
{AR(1)}             &$\text{QIC}$       & 0.098 & 0.288 & 0.133 & 0.481 && 0.073 & 0.297 & 0.097 & 0.534 \\
                    &$|\text{RJ1}-1|$   & 0.059 & 0.313 & 0.433 & 0.205 && 0.006 & 0.337 & 0.441 & 0.229    \\
                    &$|\text{RJ2}-1|$   & 0.034 & 0.310 & 0.447 & 0.214 && 0.003 & 0.347 & 0.447 & 0.212     \\
                    &$\text{SC}$        & 0.336 & 0.424 & 0.161 & 0.079 && 0.379 & 0.413 & 0.148 & 0.060\\ \hline 
                    
                    &$\text{CIC}$       & 0.011 & 0.063 & 0.074 & 0.853 && 0.004 & 0.020 & 0.041 & 0.937   \\ 
                    &$\text{DBAR}$      & 0.000 & 0.049 & 0.617 & 0.344 && 0.000 & 0.011 & 0.598 & 0.443   \\      
                    &$\text{GPC}$        & 0.175 & 0.140 & 0.623 & 0.063 && 0.271 & 0.155 & 0.527 & 0.047 \\
{Exch}              &$\text{QIC}$       & 0.152 & 0.075 & 0.211 & 0.567 && 0.122 & 0.054 & 0.194 & 0.631\\
                    &$|\text{RJ1}-1|$   & 0.004 & 0.201 & 0.589 & 0.214 && 0.000 & 0.082 & 0.625 & 0.319   \\
                    &$|\text{RJ2}-1|$   & 0.000 & 0.125 & \textbf{0.669} & 0.211 && 0.000 & 0.036 & \textbf{0.680} & 0.288    \\
                    &$\text{SC}$        & 0.335 & 0.177 & 0.429 & 0.059 && 0.386 & 0.169 & 0.401 & 0.044\\ \hline 
                     
                    &$\text{CIC}$       & 0.009 & 0.107 & 0.057 & \textbf{0.830} && 0.002 & 0.047 & 0.022 & \textbf{0.929}   \\ 
                    &$\text{DBAR}$      & 0.002 & 0.169 & 0.504 & 0.347 && 0.000 & 0.079 & 0.570 & 0.394  \\      
                    &$\text{GPC}$        & 0.141 & 0.300 & 0.460 & 0.100 && 0.242 & 0.322 & 0.365 & 0.071 \\
{UN}                &$\text{QIC}$       & 0.116 & 0.160 & 0.164 & 0.562 && 0.113 & 0.151 & 0.147 & 0.590 \\
                    &$|\text{RJ1}-1|$   & 0.023 & 0.378 & 0.423 & 0.182 && 0.002 & 0.255 & 0.470 & 0.291    \\
                    &$|\text{RJ2}-1|$   & 0.009 & 0.300 & 0.500 & 0.197 && 0.000 & 0.189 & 0.529 & 0.286   \\
                    &$\text{SC}$        & 0.355 & 0.357 & 0.215 & 0.074 && 0.387 & 0.342 & 0.211 & 0.060\\ \hline                      
\end{tabular}}
\end{table}

\begin{table}[ht]
\centering
\captionsetup{width=0.75\textwidth}
\caption{Proportion of selecting the true correlation structure by the competing selection criteria for unbalanced longitudinal count data with different sample sizes when true within-subject correlation level is $\alpha=0.2$.}
\smallskip
\label{tab1M7}
\scalebox{0.7}{
\begin{tabular}{@{}llcccccccccc@{}} \hline
True&&\multicolumn{4}{c}{$N=50$} && \multicolumn{4}{c}{$N=100$}\\
\cmidrule{3-6}\cmidrule{8-11}
correlation &Selection&\multicolumn{4}{c}{Working corr. structure}&& \multicolumn{4}{c}{Working corr. structure}\\
\cmidrule{3-6}\cmidrule{8-11}
structure &criteria    &{Indep} & {AR(1)} & {Exch}  &  {UN}  && {Indep} & {AR(1)} & {Exch}  &  {UN} \\
\midrule
                    &$\text{CIC}$       & 0.041 & 0.109 & 0.031 & 0.823 && 0.019 & 0.075 & 0.022 & 0.888   \\ 
                    &$\text{DBAR}$      & 0.065 & 0.235 & 0.388 & 0.366 && 0.033 & 0.228 & 0.418 & 0.402   \\      
                    &$\text{GPC}$        & 0.041 & \textbf{0.393} & 0.410 & 0.156 && 0.089 & \textbf{0.362} & 0.388 & 0.162 \\
{AR(1)}             &$\text{QIC}$       & 0.135 & 0.197 & 0.118 & 0.553 && 0.117 & 0.196 & 0.089 & 0.604  \\
                    &$|\text{RJ1}-1|$   & 0.203 & 0.269 & 0.318 & 0.226 && 0.134 & 0.299 & 0.343 & 0.238    \\
                    &$|\text{RJ2}-1|$   & 0.136 & 0.303 & 0.322 & 0.241 && 0.092 & 0.308 & 0.353 & 0.256    \\
                    &$\text{SC}$        & 0.306 & 0.311 & 0.229 & 0.154 && 0.314 & 0.301 & 0.221 & 0.164\\ \hline 
                    
                    &$\text{CIC}$       & 0.031 & 0.064 & 0.067 & 0.841 && 0.015 & 0.026 & 0.046 & 0.916   \\ 
                    &$\text{DBAR}$      & 0.010 & 0.077 & 0.532 & 0.407 && 0.000 & 0.039 & \textbf{0.553} & 0.470  \\      
                    &$\text{GPC}$        & 0.070 & 0.138 & \textbf{0.641} & 0.152 && 0.142 & 0.172 & 0.547 & 0.139\\
{Exch}              &$\text{QIC}$       & 0.157 & 0.112 & 0.173 & 0.561 && 0.130 & 0.068 & 0.161 & 0.648 \\
                    &$|\text{RJ1}-1|$   & 0.066 & 0.274 & 0.432 & 0.241 && 0.013 & 0.175 & 0.493 & 0.342    \\
                    &$|\text{RJ2}-1|$   & 0.032 & 0.191 & 0.519 & 0.264 && 0.006 & 0.120 & 0.539 & 0.346     \\
                    &$\text{SC}$        & 0.331 & 0.183 & 0.347 & 0.140 && 0.366 & 0.190 & 0.320 & 0.124\\ \hline 
                     
                    &$\text{CIC}$       & 0.034 & 0.089 & 0.065 & \textbf{0.813} && 0.017 & 0.063 & 0.036 & \textbf{0.885}    \\ 
                    &$\text{DBAR}$      & 0.041 & 0.193 & 0.468 & 0.351 && 0.017 & 0.153 & 0.443 & 0.451  \\      
                    &$\text{GPC}$        & 0.051 & 0.314 & 0.485 & 0.150 && 0.090 & 0.279 & 0.450 & 0.181 \\
{UN}                &$\text{QIC}$       & 0.155 & 0.157 & 0.152 & 0.539 && 0.118 & 0.162 & 0.112 & 0.611 \\
                    &$|\text{RJ1}-1|$   & 0.152 & 0.277 & 0.367 & 0.213 && 0.075 & 0.329 & 0.364 & 0.256   \\
                    &$|\text{RJ2}-1|$   & 0.087 & 0.284 & 0.406 & 0.233 && 0.048 & 0.293 & 0.384 & 0.286   \\
                    &$\text{SC}$        & 0.323 & 0.297 & 0.240 & 0.142 && 0.332 & 0.269 & 0.243 & 0.157\\ \hline                      
\end{tabular}}
\end{table}

\begin{table}[ht]
\centering
\captionsetup{width=0.75\textwidth}
\caption{Proportion of selecting the true correlation structure by the competing selection criteria for unbalanced longitudinal count data with different sample sizes when true within-subject correlation level is $\alpha=0.4$.}
\smallskip
\label{tab1M8}
\scalebox{0.7}{
\begin{tabular}{@{}llcccccccccc@{}} \hline
True&&\multicolumn{4}{c}{$N=50$} && \multicolumn{4}{c}{$N=100$}\\
\cline{3-6}\cline{8-11}
correlation &Selection&\multicolumn{4}{c}{Working corr. structure}&& \multicolumn{4}{c}{Working corr. structure}\\
\cline{3-6}\cline{8-11}
structure &criteria    &{Indep} & {AR(1)} & {Exch}  &  {UN}  && {Indep} & {AR(1)} & {Exch}  &  {UN} \\
\hline
                    &$\text{CIC}$       & 0.015 & 0.206 & 0.050 & 0.735 && 0.002 & 0.149 & 0.013 & 0.841   \\ 
                    &$\text{DBAR}$      & 0.013 & 0.291 & 0.466 & 0.257 && 0.001 & 0.262 & 0.484 & 0.311   \\      
                    &$\text{GPC}$        & 0.100 & \textbf{0.452} & 0.378 & 0.071 && 0.195 & \textbf{0.422} & 0.326 & 0.057 \\
{AR(1)}             &$\text{QIC}$       & 0.141 & 0.280 & 0.134 & 0.448 && 0.080 & 0.300 & 0.122 & 0.501  \\
                    &$|\text{RJ1}-1|$   & 0.069 & 0.326 & 0.428 & 0.190 && 0.014 & 0.345 & 0.428 & 0.232    \\
                    &$|\text{RJ2}-1|$   & 0.034 & 0.375 & 0.424 & 0.174 && 0.005 & 0.345 & 0.440 & 0.219   \\
                    &$\text{SC}$        & 0.295 & 0.433 & 0.205 & 0.068 && 0.356 & 0.391 & 0.197 & 0.056  \\ \hline 
                    
                    &$\text{CIC}$       & 0.015 & 0.072 & 0.132 & 0.784 && 0.004 & 0.029 & 0.079 & 0.891   \\ 
                    &$\text{DBAR}$      & 0.000 & 0.066 & 0.601 & 0.357 && 0.000 & 0.020 & 0.629 & 0.400   \\      
                    &$\text{GPC}$        & 0.126 & 0.142 & \textbf{0.672} & 0.062 && 0.259 & 0.175 & 0.537 & 0.030\\
{Exch}              &$\text{QIC}$       & 0.164 & 0.085 & 0.254 & 0.500 && 0.117 & 0.058 & 0.245 & 0.584  \\
                    &$|\text{RJ1}-1|$   & 0.010 & 0.300 & 0.490 & 0.212 && 0.000 & 0.128 & 0.601 & 0.286    \\
                    &$|\text{RJ2}-1|$   & 0.003 & 0.189 & 0.606 & 0.208 && 0.000 & 0.076 & \textbf{0.655} & 0.276    \\
                    &$\text{SC}$        & 0.264 & 0.193 & 0.487 & 0.058 && 0.368 & 0.189 & 0.418 & 0.026\\ \hline 
                     
                    &$\text{CIC}$       & 0.013 & 0.146 & 0.073 & \textbf{0.768} && 0.003 & 0.071 & 0.040 & \textbf{0.887}   \\ 
                    &$\text{DBAR}$      & 0.003 & 0.210 & 0.514 & 0.299 && 0.000 & 0.103 & 0.558 & 0.381   \\      
                    &$\text{GPC}$        & 0.124 & 0.327 & 0.474 & 0.076 && 0.223 & 0.295 & 0.426 & 0.056  \\
{UN}                &$\text{QIC}$       & 0.142 & 0.180 & 0.196 & 0.485 && 0.115 & 0.161 & 0.150 & 0.577 \\
                    &$|\text{RJ1}-1|$   & 0.022 & 0.400 & 0.421 & 0.167 && 0.005 & 0.314 & 0.452 & 0.242    \\
                    &$|\text{RJ2}-1|$   & 0.011 & 0.316 & 0.492 & 0.184 && 0.001 & 0.238 & 0.533 & 0.233     \\
                    &$\text{SC}$        & 0.317 & 0.344 & 0.276 & 0.063 && 0.351 & 0.314 & 0.289 & 0.046 \\ \hline                      
\end{tabular}}
\end{table}

\begin{table}[ht]
\captionsetup{width=0.75\textwidth}
\caption{Proportion of selecting the true correlation structure among the candidate working correlation structures {Indep}, {AR(1)}, {Exch} by the competing selection criteria for balanced longitudinal binary data with different sample sizes when true within-subject correlation level is $\alpha=0.2$.}
\smallskip
\centering
\label{tab1M9}
\scalebox{0.8}{
\begin{tabular}{llccccccc} \hline
True&&\multicolumn{3}{c}{$N=50$} && \multicolumn{3}{c}{$N=100$}\\
\cmidrule{3-5}\cmidrule{7-9}
correlation &Selection&\multicolumn{3}{c}{Working corr. structure}&& \multicolumn{3}{c}{Working corr. structure}\\ \cmidrule{3-5}\cmidrule{7-9}
structure &criteria             &{Indep}& {AR(1)} & {Exch}  && {Indep} & {AR(1)} & {Exch}  \\
\hline
            &$\text{CIC}$       & 0.116 & \textbf{0.705} & 0.196 && 0.078  & \textbf{0.790} & 0.139\\
            &$\text{DBAR}$      & 0.057 & 0.360 & 0.648 && 0.016  & 0.359  & 0.695 \\
            &$\text{GPC}$        & 0.049 & 0.487 & 0.464 && 0.042  & 0.467  & 0.491  \\
{AR(1)}     &$\text{QIC}$       & 0.221 & 0.564 & 0.226 && 0.188  & 0.598  & 0.221  \\
            &$|\text{RJ1}-1|$   & 0.110 & 0.303 & 0.599 && 0.064  & 0.324  & 0.620\\
            &$|\text{RJ2}-1|$   & 0.068 & 0.331 & 0.605 && 0.033  & 0.340  & 0.632\\
            &$\text{SC}$        & 0.300 & 0.453 & 0.250 && 0.310  & 0.437  & 0.255\\ \hline
                    
          &$\text{CIC}$         & 0.151 & 0.246 & 0.607 && 0.071 & 0.190 & 0.743\\
          &$\text{DBAR}$        & 0.003 & 0.054 & \textbf{0.951} && 0.000 & 0.006 & \textbf{0.995} \\    
          &$\text{GPC}$          & 0.035 & 0.274 & 0.691 && 0.049 & 0.349 & 0.602\\
{Exch}    &$\text{QIC}$         & 0.290 & 0.216 & 0.500 && 0.216 & 0.172 & 0.615  \\
          &$|\text{RJ1}-1|$     & 0.017 & 0.124 & 0.861 && 0.000 & 0.041 & 0.960\\
          &$|\text{RJ2}-1|$     & 0.009 & 0.071 & 0.920 && 0.000 & 0.012 & 0.989 \\ 
          &$\text{SC}$          & 0.363 & 0.391 & 0.247 && 0.303 & 0.418 & 0.279  \\ \hline         
\end{tabular}}
\end{table}

\begin{table}[ht]
\centering
\captionsetup{width=0.75\textwidth}
\caption{Values of selection criteria ({CIC}, {DBAR}, {GPC}, {QIC}, {RJ1}, {RJ2}, and {SC}) under different working correlation structures (Indep, AR(1), Exch, and UN) for the analysis of the CARDIA study.}
\smallskip  
\label{tab:tab2M3} 
\scalebox{0.8}{
\begin{tabular} {@{}llccccc@{}} \hline
         &
Selection&&\multicolumn{4}{c}{Working correlation structure}\\ \cline{3-7}
Race/Gender       &criteria && {Indep}  & {AR(1)} & {Exch} &  {UN}  \\ 
\hline 

                    &$\text{CIC}$       && 22.259    & 20.580            &  20.699          & \textbf{20.408}  \\ 
                    &$\text{DBAR}$      && 6.582     & 0.083             &  0.033           & \textbf{0.006}   \\      
                    &$\text{GPC}$        && 7217.867  & \textbf{6834.430} &  6900.614        & 7100.406 \\
Black females       &$\text{QIC}$       && 8228.186  & \textbf{8226.052} &  8226.733        & 8226.345 \\
                    &$|\text{RJ1}-1|$   && 1.782     & 0.045             &  0.056           & \textbf{0.007}    \\
                    &$|\text{RJ2}-1|$   && 10.147    & 0.174             &  0.144           & \textbf{0.008}    \\
                    &$\text{SC}$        && 7173.059  & \textbf{6817.624} &  6883.653        & 7084.452  \\ \hline 

                    &$\text{CIC}$       && 21.576    & 19.703            & 19.705           & \textbf{19.343}\\ 
                    &$\text{DBAR}$      && 5.936     & 0.056             & 0.095            & \textbf{0.003} \\      
                    &$\text{GPC}$        && 5367.099  & \textbf{5101.026} & 5214.248         & 5337.206\\
{Black males}       &$\text{QIC}$       && 6532.317  & 6531.367          & \textbf{6530.689}& 6530.804\\
                    &$|\text{RJ1}-1|$   && 1.697     & 0.027             & 0.069            & \textbf{0.000}\\
                    &$|\text{RJ2}-1|$   && 9.330     & 0.109             & 0.234            & \textbf{0.004} \\
                    &$\text{SC}$        && 5323.583  & \textbf{5084.500} & 5197.047         & 5321.113 \\ \hline    

                    &$\text{CIC}$       &&  21.233   & 20.147            & 20.235            & \textbf{19.861}  \\ 
                    &$\text{DBAR}$      &&  5.632    & 0.129             & 0.037             & \textbf{0.012}  \\      
                    &$\text{GPC}$        &&  6918.060 & 6620.616          & \textbf{6392.052} & 6655.366  \\
White females       &$\text{QIC}$       &&  6498.943 & \textbf{6498.869} & 6500.929          & 6499.063  \\
                    &$|\text{RJ1}-1|$   &&  1.654    & 0.098             & 0.052             & \textbf{0.014}  \\
                    &$|\text{RJ2}-1|$   &&  8.941    & 0.325             & 0.141             & \textbf{0.016}  \\
                    &$\text{SC}$        &&  6875.203 & 6602.925          & \textbf{6375.116} & 6639.486 \\
\hline 
                    &$\text{CIC}$       &&  22.169              & 20.973            & 21.048    & \textbf{20.788}  \\ 
                    &$\text{DBAR}$      &&  6.448               & 0.103             & 0.072     & \textbf{0.017}   \\      
                    &$\text{GPC}$        &&  6157.361            & \textbf{5321.125} & 5411.959  & 5592.362  \\
White males         &$\text{QIC}$       &&  \textbf{5949.163}   & 5950.807          & 5950.156  & 5949.231  \\
                    &$|\text{RJ1}-1|$   &&  1.771               & 0.045             & 0.058     & \textbf{0.016}  \\
                    &$|\text{RJ2}-1|$   &&  9.990               & 0.193             & 0.189     & \textbf{0.015}  \\
                    &$\text{SC}$        &&  6112.547            & \textbf{5304.263} & 5394.906  & 5576.505  \\
\hline 
\multicolumn{7}{l}{Note that: The smaller value of each criterion leads to a better working}\\
\multicolumn{7}{l}{correlation structure.}
\end{tabular}}
\end{table}

\begin{table}[ht]
\caption{GEE estimates (emp.\ se's) for the CARDIA study.}
\smallskip
\label{tab3}
\begin{center}
\scalebox{0.8}{
\begin{tabular}{@{}lccccccc@{}} \hline
                        &            &&   \multicolumn{2}{c}{Black}       && \multicolumn{2}{c}{White} \\ \cline{4-5}\cline{7-8}
{Variable}              &{Parameter} && {Females}       & {Males}         && {Females}       & {Males} \\ \hline
Intercept               &$\beta_{0}$ && -0.011  (0.111) &  0.164  (0.110) &&  0.164  (0.147) &  0.295  (0.155) \\
Age/10                  &$\beta_{1}$ &&  0.327* (0.157) &  0.648* (0.176) && -0.244  (0.186) & -0.079  (0.193) \\
Age-square              &$\beta_{2}$ && -0.593  (0.430) &  0.319  (0.467) && -0.305  (0.512) & -0.692  (0.573) \\
Some college$^{a}$   &$\beta_{3}$ && -0.693* (0.117) & -0.964* (0.125) && -0.590* (0.161) & -0.987* (0.169) \\
College degree$^{a}$ &$\beta_{4}$ && -1.769* (0.170) & -1.787* (0.177) && -1.871* (0.161) & -2.005* (0.167) \\
Year (Yr/10)            &$\beta_{L}$ &&  0.468* (0.220) &  0.449  (0.248) && -0.713* (0.266) &  0.022  (0.248)  \\
Year-square             &$\beta_{Q}$ && -0.935* (0.388) & -0.451  (0.429) &&  0.516  (0.475) & -0.290  (0.437) \\
Year-cubic              &$\beta_{C}$ &&  0.353* (0.174) &  0.089  (0.191) && -0.203  (0.215) &  0.068  (0.195) \\ \hline
Within-subject          & \multirow{2}{*}{$\alpha$}   &&   \multirow{2}{*}{0.779}    &    \multirow{2}{*}{0.790}          &&                 \multirow{2}{*}{0.617} & \multirow{2}{*}{0.737}              \\ 
association             &            &&                 &                 &&                 &              \\ \hline  
\multicolumn{8}{l}{Note that: Ar(1) structure is used for black females, black males, and white males}\\ 
\multicolumn{8}{l}{and Exch structure is for white females.}\\
\multicolumn{8}{l}{$*$ P-value $<$ $0.05$.}\\
\multicolumn{8}{l}{$^{a}$No college attended is the reference category.}\\
\end{tabular}}
\end{center}
\end{table}

\begin{table}[ht]
\centering
\caption{Test for any trend and Yr 0 vs Yr 1.5.}
\smallskip
\label{tab4}
\begin{tabular}{@{}llcccccc@{}}
\toprule
             &&\multicolumn{2}{c}{Black} && \multicolumn{2}{c}{White} \\
\cmidrule{3-4}\cmidrule{6-7}
             & & Females & Males&& Females &Males \\
\midrule
Score test:  & $\chi^2_3$  & 28.49     & 13.66      && 63.86    & 37.99 \\
\,\, Trend   & P-value     & $<0.001$  & $<0.050$   && $<0.001$ & $<0.001$\\
\midrule
Score test:  & $\chi^2_1$  & 13.06     & 0.31       && 58.43    & 34.23  \\
\,\, Yr 0 vs Yr 1.5 &P-value& $<0.001$  &$0.580$     && $<0.001$ & $<0.001$ \\ \hline
\end{tabular}
\end{table}

\clearpage
\newpage
 \vspace{2.5cm}
    \begin{figure}[h]
       \centering
        \includegraphics[width=0.75\textwidth]{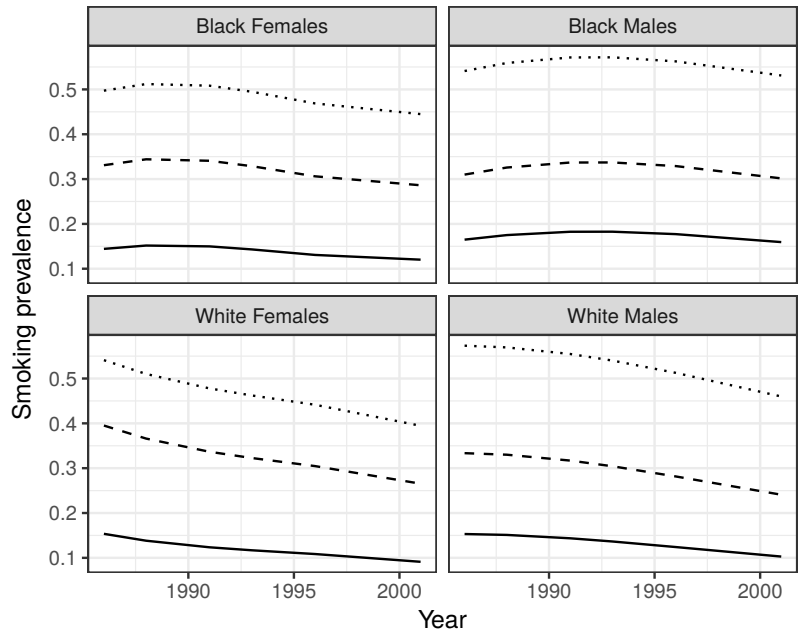}
        \caption{Model predicted trends in smoking among a cohort of young adults, the CARDIA study; solid line indicates those who attained a college degree, long dashes indicates some college, and short dashes indicates high school education or less.}\label{fig:fig1}
    \end{figure}

\clearpage    
\newpage
\appendix
\numberwithin{remark}{section}
\section{Details of the GPC statistic}

\begin{theorem}
Define $\hat{\mathbf{r}}_{(i)}=\textbf{Z}_i-\textbf{X}_i\hat{\bm{\beta}}_{(i)}$ as the linearized {PRESS} residual vector for the $i$th cluster. 
The $\text{GPC}$ statistic approximates
$\sum_i^{N} \hat{\mathbf{r}}_{(i)}'\{\hat{\text{Cov}}(\hat{\mathbf{r}}_{(i)})\}^{-1}\hat{\mathbf{r}}_{(i)}$ and
$\sum_i^{N} \hat{\mathbf{e}}_{(i)}' \{\hat{\text{Cov}}(\hat{\mathbf{e}}_{(i)})\}^{-1} \hat{\mathbf{e}}_{(i)},$
where $\hat{\mathbf{e}}_{(i)}=\mathbf{y}_i-g^{-1}(\textbf{X}_i\hat{\bm{\beta}}_{(i)}).$
\end{theorem}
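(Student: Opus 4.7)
The plan is to derive both approximations from the one-step cluster-deletion formula of \citet{preisser1996deletion}, combined with a first-order Taylor linearisation of the link function and the covariance approximation $\text{Cov}(\mathbf{e}_i) \approx (\textbf{I}-\textbf{H}_i)\textbf{V}_i(\textbf{I}-\textbf{H}_i')$ already recorded in Section~\ref{sec:press}. Letting $\hat{\textbf{L}}_i = \partial\hat{\bm{\mu}}_i/\partial\hat{\bm{\eta}}_i$ (so that $\hat{\textbf{D}}_i=\hat{\textbf{L}}_i\textbf{X}_i$), the IRLS working response used to define the linearised PRESS residual is $\textbf{Z}_i = \textbf{X}_i\hat{\bm{\beta}} + \hat{\textbf{L}}_i^{-1}\hat{\mathbf{e}}_i$, and the deletion formula yields
\begin{equation*}
\hat{\bm{\beta}} - \hat{\bm{\beta}}_{(i)} \;\approx\; \hat{\textbf{M}}^{-1}\hat{\textbf{D}}_i'\hat{\textbf{V}}_i^{-1}(\textbf{I}-\hat{\textbf{H}}_i)^{-1}\hat{\mathbf{e}}_i .
\end{equation*}
The only non-trivial algebraic identity the rest of the argument rests on is $\textbf{X}_i\hat{\textbf{M}}^{-1}\hat{\textbf{D}}_i'\hat{\textbf{V}}_i^{-1} = \hat{\textbf{L}}_i^{-1}\hat{\textbf{H}}_i$, which is immediate from $\textbf{X}_i=\hat{\textbf{L}}_i^{-1}\hat{\textbf{D}}_i$ and the definition of $\hat{\textbf{H}}_i$.

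For the linearised PRESS residual I would substitute the deletion formula into $\hat{\mathbf{r}}_{(i)} = \textbf{X}_i(\hat{\bm{\beta}}-\hat{\bm{\beta}}_{(i)}) + \hat{\textbf{L}}_i^{-1}\hat{\mathbf{e}}_i$ and collapse using the identity $\textbf{I}+\hat{\textbf{H}}_i(\textbf{I}-\hat{\textbf{H}}_i)^{-1} = (\textbf{I}-\hat{\textbf{H}}_i)^{-1}$ to obtain $\hat{\mathbf{r}}_{(i)} \approx \hat{\textbf{L}}_i^{-1}(\textbf{I}-\hat{\textbf{H}}_i)^{-1}\hat{\mathbf{e}}_i$. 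Propagating the $(\textbf{I}-\textbf{H}_i)\textbf{V}_i(\textbf{I}-\textbf{H}_i')$ approximation for $\text{Cov}(\mathbf{e}_i)$ through this linear transformation makes the $(\textbf{I}-\hat{\textbf{H}}_i)^{\pm 1}$ factors cancel cleanly, leaving $\hat{\text{Cov}}(\hat{\mathbf{r}}_{(i)}) \approx \hat{\textbf{L}}_i^{-1}\hat{\textbf{V}}_i\hat{\textbf{L}}_i^{-1}$. Plugging both expressions into the quadratic form, the $\hat{\textbf{L}}_i^{\pm 1}$ matrices also cancel and one reads off exactly the $i$th summand of GPC in~(\ref{lpstat}).

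For the response-scale prediction residual, a first-order Taylor expansion of $g^{-1}$ about $\textbf{X}_i\hat{\bm{\beta}}$ gives $\hat{\mathbf{e}}_{(i)} \approx \hat{\mathbf{e}}_i + \hat{\textbf{L}}_i\textbf{X}_i(\hat{\bm{\beta}}-\hat{\bm{\beta}}_{(i)})$, which by the same deletion identity telescopes to $\hat{\mathbf{e}}_{(i)} \approx (\textbf{I}-\hat{\textbf{H}}_i)^{-1}\hat{\mathbf{e}}_i$; the associated covariance is then simply $\hat{\textbf{V}}_i$, so the quadratic form once again equals the GPC summand. The one delicate point — and the main obstacle I anticipate — is bookkeeping rather than computation: three separate first-order approximations (the deletion update, the Taylor expansion of $g^{-1}$, and the $(\textbf{I}-\textbf{H}_i)\textbf{V}_i(\textbf{I}-\textbf{H}_i')$ approximation to $\text{Cov}(\mathbf{e}_i)$) are being composed, so I would flag each $\approx$ at its point of use and finally sanity-check against the linear, independent, constant-variance special case, where the statement should collapse to the classical PRESS $\sum_i e_i^2/(1-h_i)^2$ recorded immediately after the definition of GPC.
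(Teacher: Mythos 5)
Your proposal is correct and follows essentially the same route as the paper's own proof: the one-step cluster-deletion approximation of $\hat{\bm{\beta}}-\hat{\bm{\beta}}_{(i)}$, the resolvent identity $\textbf{I}+\hat{\textbf{H}}_i(\textbf{I}-\hat{\textbf{H}}_i)^{-1}=(\textbf{I}-\hat{\textbf{H}}_i)^{-1}$, the first-order expansion of $g^{-1}$, and propagation of $\text{Cov}(\mathbf{e}_i)\approx(\textbf{I}-\textbf{H}_i)\textbf{V}_i(\textbf{I}-\textbf{H}_i')$. The only difference is notational — your $\hat{\textbf{L}}_i=\partial\hat{\bm{\mu}}_i/\partial\hat{\bm{\eta}}_i$ is the inverse of the paper's $\textbf{L}_i$ (the paper takes $\textbf{D}_i=\textbf{L}_i^{-1}\textbf{X}_i$), and your intermediate expressions are consistent with your convention.
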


\begin{proof}
The relation $\hat{\mathbf{r}}_{(i)}=\textbf{X}_i(\hat{\bm{\beta}}-\hat{\bm{\beta}}_{(i)})+\textbf{L}_i\hat{\mathbf{e}}_i$ follows
by substitution of $\textbf{Z}_i=\textbf{X}_i\hat{\bm{\beta}}+\textbf{L}_i\hat{\mathbf{e}}_i.$  Next, define
$\textbf{C}_i=\textbf{M}^{-1}\textbf{D}_i'\textbf{V}_i^{-1}(\textbf{I}-\textbf{H}_i)^{-1}\hat{\mathbf{e}}_i$,
which is a one-step approximation of $\hat{\bm{\beta}}-\hat{\bm{\beta}}_{(i)}$ \citep{preisser2008note} and reduces
to DBETA for multiple linear regression \citep{belsey1980regression}.
Note that $\textbf{D}_i = \textbf{L}_i^{-1}\textbf{X}_i$.  By substitution,
$\hat{\mathbf{r}}_{(i)}  \approx \textbf{L}_i\big[\textbf{L}_i^{-1}\textbf{X}_i\textbf{M}^{-1}\textbf{D}_i'\textbf{V}_i^{-1}(\textbf{I}-\textbf{H}_i)^{-1}+\textbf{I}_{n_i}\big]\hat{\mathbf{e}}_i
=\textbf{L}_i\big[\textbf{H}_i(\textbf{I}-\textbf{H}_i)^{-1}+\textbf{I}\big]\hat{\mathbf{e}}_i
=\textbf{L}_i(\textbf{I}-\textbf{H}_i)^{-1}\hat{\mathbf{e}}_i.$ It follows that
$\text{Cov}(\hat{\mathbf{r}}_{(i)})\approx
\textbf{L}_i (\textbf{I}-\textbf{H}_i)^{-1}\text{Cov}(\hat{\mathbf{e}}_i) (\textbf{I}-\textbf{H}'_i)^{-1} \textbf{L}_i \approx \textbf{L}_i\textbf{V}_i\textbf{L}_i.$
Thus $\sum_i^{N}\hat{\mathbf{r}}_{(i)}'\{\hat{\text{Cov}}(\hat{\mathbf{r}}_{(i)})\}^{-1}\hat{\mathbf{r}}_{(i)} \approx \text{GPC}.$

Next, a first-order linear Taylor series expansion of $g^{-1}(\textbf{X}_i\hat{\bm{\beta}}_{(i)})$ about $\hat{\bm{\beta}}$ is
$$g^{-1}(\textbf{X}_i\hat{\bm{\beta}}_{(i)}) \approx g^{-1}(\textbf{X}_i\hat{\bm{\beta}}) - \textbf{L}_i^{-1}\textbf{X}_i(\hat{\bm{\beta}}-\hat{\bm{\beta}}_{(i)}).$$
Applying $\textbf{C}_i$ and algebra yields $\hat{\mathbf{e}}_{(i)} \approx \hat{\mathbf{e}}_{i} + \textbf{H}_i(\textbf{I}-\textbf{H}_i)^{-1}\hat{\mathbf{e}}_{i}
=(\textbf{I}-\textbf{H}_i)^{-1}\hat{\mathbf{e}}_{i}.$  Thus,
$\hat{\text{Cov}}(\hat{\mathbf{e}}_{(i)}) \approx (\textbf{I}-\textbf{H}_i)^{-1} \hat{\text{Cov}}(\hat{\mathbf{e}}_i) (\textbf{I}-\textbf{H}'_i)^{-1} \approx \hat{\textbf{V}}_i$
and $\sum_i^{N} \hat{\mathbf{e}}_{(i)}' \{\hat{\text{Cov}}(\hat{\mathbf{e}}_{(i)})\}^{-1} \hat{\mathbf{e}}_{(i)} \linebreak \approx \text{GPC}.$\\
\end{proof}

\newpage
\section{Mean squared error values}
\setcounter{table}{0}
\renewcommand{\thetable}{B\arabic{table}}

Appendix provides the results for mean squared error values of regression parameters for each $48$ different scenario above via Tables~\ref{tab1S1}-\ref{tab1S8} given below.

\begin{table}[ht]
\centering
\caption{Mean squared error values of parameters for the case of balanced longitudinal binary data with different sample sizes when true within-subject correlation level is $\alpha=0.2$.}
\smallskip
\label{tab1S1}
\scalebox{0.7}{
\begin{tabular}{@{}llcccccccccc@{}} \hline
True&&\multicolumn{4}{c}{$N=50$} && \multicolumn{4}{c}{$N=100$}\\
\cline{3-6}\cline{8-11}
correlation &Param.&\multicolumn{4}{c}{Working corr.}&& \multicolumn{4}{c}{Working corr. structure}\\
\cline{3-6}\cline{8-11}
structure &   &{Indep}     &{AR(1)} & {Exch} & {UN}  && {Indep}     &{AR(1)} & {Exch} & {UN} \\
\hline
                    &$\beta_{0}$        &  0.084 & 0.083 & 0.084 & 0.093 && 0.040 & 0.041 & 0.042 & 0.042  \\
{AR(1)}             &$\beta_{1}$        &  0.150 & 0.148 & 0.150 & 0.154 && 0.070 & 0.067 & 0.068 & 0.070 \\
                    &$\beta_{2}$        &  0.103 & 0.100 & 0.102 & 0.116 && 0.050 & 0.047 & 0.049 & 0.049  \\ \hline 
                    
                    &$\beta_{0}$        &  0.113 & 0.113 & 0.110 & 0.124 && 0.054 & 0.053 & 0.052 & 0.056\\
{Exch}              &$\beta_{1}$        &  0.199 & 0.199 & 0.198 & 0.214 && 0.090 & 0.091 & 0.090 & 0.097\\
                    &$\beta_{2}$        &  0.107 & 0.105 & 0.099 & 0.112 && 0.049 & 0.047 & 0.045 & 0.048  \\ \hline 
                     
                    &$\beta_{0}$        &  0.094 & 0.094 & 0.094 & 0.105 && 0.049 & 0.048 & 0.048 & 0.050\\ 
{UN}                &$\beta_{1}$        &  0.169 & 0.167 & 0.169 & 0.182 && 0.083 & 0.083 & 0.084 & 0.086\\      
                    &$\beta_{2}$        &  0.103 & 0.098 & 0.099 & 0.112 && 0.053 & 0.050 & 0.050 & 0.052\\   \hline 
                                
\end{tabular}}
\end{table}

\begin{table}[ht]
\centering
\caption{Mean squared error values of parameters for the case of balanced longitudinal binary data with different sample sizes when true within-subject correlation level is $\alpha=0.4$.}
\smallskip
\label{tab1S2}
\scalebox{0.7}{
\begin{tabular}{@{}llcccccccccc@{}} \hline
True&&\multicolumn{4}{c}{$N=50$} && \multicolumn{4}{c}{$N=100$}\\
\cline{3-6}\cline{8-11}
correlation &Param.&\multicolumn{4}{c}{Working corr. structure}&& \multicolumn{4}{c}{Working corr. structure}\\
\cline{3-6}\cline{8-11}
structure &    &{Indep}     &{AR(1)} & {Exch} & {UN}  && {Indep} &{AR(1)} & {Exch} & {UN} \\
\hline
                    &$\beta_{0}$        &  0.109 & 0.098 & 0.104 & 0.109 && 0.050 & 0.048 & 0.051 & 0.052    \\ 
{AR(1)}             &$\beta_{1}$        &  0.204 & 0.198 & 0.202 & 0.215 && 0.100 & 0.096 & 0.100 & 0.102    \\      
                    &$\beta_{2}$        &  0.116 & 0.089 & 0.105 & 0.101 && 0.050 & 0.038 & 0.043 & 0.041    \\  \hline  
                    
                    &$\beta_{0}$        &  0.157 & 0.156 & 0.149 & 0.164 && 0.073 & 0.071 & 0.070 & 0.079  \\ 
{Exch}              &$\beta_{1}$        &  0.296 & 0.300 & 0.293 & 0.317 && 0.134 & 0.135 & 0.132 & 0.14  \\      
                    &$\beta_{2}$        &  0.108 & 0.090 & 0.079 & 0.084 && 0.057 & 0.045 & 0.040 & 0.042   \\ \hline   
                    
                    &$\beta_{0}$        &  0.139 & 0.134 & 0.135 & 0.155 && 0.058 & 0.056 & 0.057 & 0.059\\ 
{UN}                &$\beta_{1}$        &  0.241 & 0.238 & 0.240 & 0.265 && 0.117 & 0.115 & 0.117 & 0.124\\      
                    &$\beta_{2}$        &  0.115 & 0.090 & 0.095 & 0.104 && 0.051 & 0.041 & 0.042 & 0.043\\  \hline 
                                                                   
\end{tabular}}
\end{table}

\begin{table}[ht]
\centering
\caption{Mean squared error values of parameters for the case of unbalanced longitudinal binary data with different sample sizes when true within-subject correlation level is $\alpha=0.2$.}
\smallskip
\label{tab1S3}
\scalebox{0.7}{
\begin{tabular}{@{}llcccccccccc@{}} \hline
True&&\multicolumn{4}{c}{$N=50$} && \multicolumn{4}{c}{$N=100$}\\
\cline{3-6}\cline{8-11}
correlation &Param.&\multicolumn{4}{c}{Working corr.}&& \multicolumn{4}{c}{Working corr. structure}\\
\cline{3-6}\cline{8-11}
structure &&{Indep}     &{AR(1)} & {Exch} & {UN}  && {Indep}     &{AR(1)} & {Exch} & {UN} \\
\hline
                    &$\beta_{0}$        &  0.101 & 0.094 & 0.100 & 0.113 && 0.050 & 0.046 & 0.047 & 0.049  \\ 
{AR(1)}             &$\beta_{1}$        &  0.159 & 0.164 & 0.160 & 0.171 && 0.080 & 0.080 & 0.081 & 0.085  \\      
                    &$\beta_{2}$        &  0.115 & 0.120 & 0.114 & 0.128 && 0.050 & 0.052 & 0.052 & 0.055  \\ \hline   
                    
                    &$\beta_{0}$        &  0.111 & 0.111 & 0.108 & 0.119 && 0.054 & 0.054 & 0.052 & 0.056  \\ 
{Exch}              &$\beta_{1}$        &  0.205 & 0.206 & 0.203 & 0.209 && 0.094 & 0.095 & 0.095 & 0.101  \\      
                    &$\beta_{2}$        &  0.129 & 0.128 & 0.121 & 0.136 && 0.055 & 0.052 & 0.050 & 0.053  \\ \hline   
                    
                    &$\beta_{0}$        &  0.106 & 0.104 & 0.104 & 0.112 && 0.050 & 0.049 & 0.049 & 0.051\\ 
{UN}                &$\beta_{1}$        &  0.182 & 0.183 & 0.182 & 0.196 && 0.087 & 0.087 & 0.087 & 0.091\\      
                    &$\beta_{2}$        &  0.115 & 0.111 & 0.110 & 0.118 && 0.054 & 0.050 & 0.052 & 0.052\\  \hline 
\end{tabular}}
\end{table}

\begin{table}[ht]
\centering
\caption{Mean squared error values of parameters for the case of unbalanced longitudinal binary data with different sample sizes when true within-subject correlation level is $\alpha=0.4$.}
\smallskip
\label{tab1S4}
\scalebox{0.7}{
\begin{tabular}{@{}llcccccccccc@{}} \hline
True&&\multicolumn{4}{c}{$N=50$} && \multicolumn{4}{c}{$N=100$}\\
\cline{3-6}\cline{8-11}
correlation &Param.&\multicolumn{4}{c}{Working corr. structure}&& \multicolumn{4}{c}{Working corr. structure}\\
\cline{3-6}\cline{8-11}
structure &&{Indep}     &{AR(1)} & {Exch} & {UN} && {Indep}     &{AR(1)} & {Exch} & {UN} \\
\hline
                    &$\beta_{0}$        & 0.135 & 0.126 & 0.131 & 0.136 && 0.060 & 0.052 & 0.054 & 0.055 \\ 
{AR(1)}             &$\beta_{1}$        & 0.231 & 0.219 & 0.227 & 0.237 && 0.110 & 0.104 & 0.106 & 0.108  \\      
                    &$\beta_{2}$        & 0.119 & 0.097 & 0.105 & 0.108 && 0.060 & 0.049 & 0.055 & 0.053  \\  \hline  
                    
                    &$\beta_{0}$        & 0.159 & 0.149 & 0.145 & 0.158 && 0.075 & 0.072 & 0.069 & 0.076   \\ 
{Exch}              &$\beta_{1}$        & 0.310 & 0.311 & 0.300 & 0.322 && 0.140 & 0.140 & 0.133 & 0.148   \\      
                    &$\beta_{2}$        & 0.117 & 0.093 & 0.084 & 0.090 && 0.059 & 0.049 & 0.042 & 0.048    \\ \hline   
                    
                    &$\beta_{0}$        & 0.130 & 0.122 & 0.124 & 0.129 && 0.065 & 0.063 & 0.062 & 0.068\\ 
{UN}                &$\beta_{1}$        & 0.252 & 0.245 & 0.249 & 0.255 && 0.110 & 0.110 & 0.109 & 0.115 \\      
                    &$\beta_{2}$        & 0.115 & 0.094 & 0.095 & 0.103 && 0.057 & 0.044 & 0.046 & 0.045 \\  \hline 
\end{tabular}}
\end{table}

\begin{table}[ht]
\centering
\caption{Mean squared error values of parameters for the case of balanced longitudinal count data with different sample sizes when true within-subject correlation level is $\alpha=0.2$.}
\smallskip
\label{tab1S5}
\scalebox{0.7}{
\begin{tabular}{@{}llcccccccccc@{}} \hline
True&&\multicolumn{4}{c}{$N=50$} && \multicolumn{4}{c}{$N=100$}\\
\cline{3-6}\cline{8-11}
correlation &Param.&\multicolumn{4}{c}{Working corr. structure}&& \multicolumn{4}{c}{Working corr. structure}\\
\cline{3-6}\cline{8-11}
structure &&{Indep}     &{AR(1)} & {Exch} & {UN}  && {Indep}     &{AR(1)} & {Exch} & {UN} \\
\hline
                    &$\beta_{0}$        & 0.005 & 0.009 & 0.008 & 0.005 && 0.000 & 0.002 & 0.002 & 0.002  \\ 
{AR(1)}             &$\beta_{1}$        & 0.006 & 0.006 & 0.006 & 0.007 && 0.000 & 0.003 & 0.003 & 0.003  \\      
                    &$\beta_{2}$        & 0.005 & 0.005 & 0.005 & 0.005 && 0.000 & 0.002 & 0.002 & 0.002\\   \hline 
                    
                    &$\beta_{0}$        & 0.006 & 0.020 & 0.007 & 0.007 && 0.003 & 0.003 & 0.004 & 0.003    \\ 
{Exch}              &$\beta_{1}$        & 0.008 & 0.008 & 0.008 & 0.008 && 0.004 & 0.004 & 0.004 & 0.004 \\      
                    &$\beta_{2}$        & 0.005 & 0.007 & 0.005 & 0.005 && 0.002 & 0.002 & 0.002 & 0.002  \\  \hline  
                    
                    &$\beta_{0}$        & 0.006 & 0.019 & 0.007 & 0.006 && 0.003 & 0.003 & 0.003 & 0.003 \\ 
{UN}                &$\beta_{1}$        & 0.006 & 0.007 & 0.006 & 0.007 && 0.003 & 0.003 & 0.003 & 0.003  \\      
                    &$\beta_{2}$        & 0.005 & 0.006 & 0.005 & 0.005 && 0.002 & 0.002 & 0.002 & 0.002  \\ \hline  
\end{tabular}}
\end{table}

\begin{table}[ht]
\centering
\caption{Mean squared error values of parameters for the case of balanced longitudinal count data with different sample sizes when true within-subject correlation level is $\alpha=0.4$.}
\smallskip
\label{tab1S6}
\scalebox{0.7}{
\begin{tabular}{@{}llcccccccccc@{}} \hline
True&&\multicolumn{4}{c}{$N=50$} && \multicolumn{4}{c}{$N=100$}\\
\cline{3-6}\cline{8-11}
correlation &Param.&\multicolumn{4}{c}{Working corr. structure}&& \multicolumn{4}{c}{Working corr. structure}\\
\cline{3-6}\cline{8-11}
structure &&{Indep}     &{AR(1)} & {Exch} & {UN}  && {Indep}     &{AR(1)} & {Exch} & {UN} \\
\hline
                    &$\beta_{0}$        &   0.006 & 0.026 & 0.012 & 0.006 && 0.000 & 0.006 & 0.005 & 0.003  \\ 
{AR(1)}             &$\beta_{1}$        &   0.008 & 0.009 & 0.008 & 0.009 && 0.000 & 0.004 & 0.004 & 0.004  \\      
                    &$\beta_{2}$        &   0.004 & 0.007 & 0.005 & 0.004 && 0.000 & 0.002 & 0.002 & 0.002\\   \hline 
                    
                    &$\beta_{0}$        &   0.008 & 0.030 & 0.010 & 0.009 && 0.004 & 0.014 & 0.005 & 0.004 \\ 
{Exch}              &$\beta_{1}$        &   0.012 & 0.012 & 0.011 & 0.013 && 0.006 & 0.006 & 0.006 & 0.006 \\      
                    &$\beta_{2}$        &   0.004 & 0.007 & 0.004 & 0.004 && 0.002 & 0.003 & 0.002 & 0.002 \\ \hline   
                    
                    &$\beta_{0}$        &  0.008 & 0.029 & 0.009 & 0.008 && 0.004 & 0.005 & 0.004 & 0.004 \\ 
{UN}                &$\beta_{1}$        &  0.011 & 0.011 & 0.011 & 0.011 && 0.005 & 0.005 & 0.005 & 0.005 \\      
                    &$\beta_{2}$        &  0.004 & 0.007 & 0.004 & 0.004 && 0.002 & 0.002 & 0.002 & 0.002 \\  \hline 
\end{tabular}}
\end{table}

\begin{table}[ht]
\centering
\caption{Mean squared error values of parameters for the case of unbalanced longitudinal count data with different sample sizes when true within-subject correlation level is $\alpha=0.2$.}
\smallskip
\label{tab1S7}
\scalebox{0.7}{
\begin{tabular}{@{}llcccccccccc@{}} \hline
True&&\multicolumn{4}{c}{$N=50$} && \multicolumn{4}{c}{$N=100$}\\
\cline{3-6}\cline{8-11}
correlation &Param.&\multicolumn{4}{c}{Working corr. structure}&& \multicolumn{4}{c}{Working corr. structure}\\
\cline{3-6}\cline{8-11}
structure &&{Indep}     &{AR(1)} & {Exch} & {UN}  && {Indep} &{AR(1)} & {Exch} & {UN} \\
\hline
                    &$\beta_{0}$        &  0.006 & 0.012 & 0.005 & 0.006 && 0.000 & 0.005 & 0.003 & 0.003  \\ 
{AR(1)}             &$\beta_{1}$        &  0.006 & 0.007 & 0.006 & 0.008 && 0.000 & 0.003 & 0.003 & 0.003  \\      
                    &$\beta_{2}$        &  0.005 & 0.006 & 0.005 & 0.006 && 0.000 & 0.002 & 0.002 & 0.002  \\ \hline   
                      
                    &$\beta_{0}$        &  0.006 & 0.020 & 0.009 & 0.007 && 0.003 & 0.010 & 0.003 & 0.003   \\ 
{Exch}              &$\beta_{1}$        &  0.009 & 0.009 & 0.009 & 0.009 && 0.004 & 0.004 & 0.004 & 0.004   \\      
                    &$\beta_{2}$        &  0.004 & 0.006 & 0.004 & 0.005 && 0.003 & 0.004 & 0.002 & 0.003   \\  \hline  
                    
                    &$\beta_{0}$        &  0.006 & 0.026 & 0.007 & 0.006 && 0.003 & 0.004 & 0.003 & 0.003\\ 
{UN}                &$\beta_{1}$        &  0.007 & 0.008 & 0.007 & 0.008 && 0.004 & 0.004 & 0.004 & 0.004 \\      
                    &$\beta_{2}$        &  0.005 & 0.007 & 0.005 & 0.005 && 0.002 & 0.002 & 0.002 & 0.003\\  \hline 
\end{tabular}}
\end{table}

\begin{table}[ht]
\centering
\caption{Mean squared error values of parameters for the case of unbalanced longitudinal count data with different sample sizes when true within-subject correlation level is $\alpha=0.4$.}
\smallskip
\label{tab1S8}
\scalebox{0.7}{
\begin{tabular}{@{}llcccccccccc@{}} \hline
True&&\multicolumn{4}{c}{$N=50$} && \multicolumn{4}{c}{$N=100$}\\
\cline{3-6}\cline{8-11}
correlation &Param.&\multicolumn{4}{c}{Working corr. structure}&& \multicolumn{4}{c}{Working corr. structure}\\
\cline{3-6}\cline{8-11}
structure &&{Indep}     &{AR(1)} & {Exch} & {UN}  && {Indep}     &{AR(1)} & {Exch} & {UN}\\
\hline
                    &$\beta_{0}$        & 0.007 & 0.029 & 0.010 & 0.008 && 0.000 & 0.012 & 0.004 & 0.003   \\ 
{AR(1)}             &$\beta_{1}$        & 0.008 & 0.009 & 0.008 & 0.009 && 0.000 & 0.004 & 0.004 & 0.004    \\      
                    &$\beta_{2}$        & 0.005 & 0.007 & 0.005 & 0.005 && 0.000 & 0.003 & 0.002 & 0.002   \\   \hline 
                    
                    &$\beta_{0}$        & 0.009 & 0.029 & 0.013 & 0.009 && 0.005 & 0.015 & 0.009 & 0.004  \\ 
{Exch}              &$\beta_{1}$        & 0.012 & 0.013 & 0.012 & 0.013 && 0.006 & 0.006 & 0.006 & 0.006   \\      
                    &$\beta_{2}$        & 0.005 & 0.007 & 0.004 & 0.004 && 0.002 & 0.004 & 0.003 & 0.002   \\  \hline  
                    
                    &$\beta_{0}$        & 0.008 & 0.034 & 0.010 & 0.008 && 0.004 & 0.007 & 0.004 & 0.004\\ 
{UN}                &$\beta_{1}$        & 0.011 & 0.011 & 0.011 & 0.012 && 0.005 & 0.005 & 0.005 & 0.005 \\      
                    &$\beta_{2}$        & 0.005 & 0.008 & 0.004 & 0.005 && 0.002 & 0.002 & 0.002 & 0.002 \\  \hline 
\end{tabular}}
\end{table}


\begin{thebibliography}{99}

\bibitem[Belsey et~al., 1980]{belsey1980regression}
D.A. Belsey, E. Kuh, and R.E. Welsch,
\textit{Regression diagnostics: Identifying influential data and sources of collinearity},
John Wiley \& Sons, New York, 2005.

\bibitem[Cantoni et~al., 2005]{cantoni2005GEE} 
E. Cantoni, J.M. Flemming, and E. Ronchetti,
\textit{Variable selection for marginal longitudinal generalized linear models}, {Biometrics} 61 (2005), pp. 507--514.

\bibitem[Carey et~al., 2015]{gee} 
V.J. Carey, T. Lumley, and B. Ripley,
\textit{gee: Generalized estimation equation solver},
{R package version 4.13-19, 2015}.
{Available at \url{https://cran.r-project.org/web/packages/gee/}}.

\bibitem[Demirtas and Hedeker, 2011]{demirtas2011practical}
H. Demirtas and D. Hedeker,
\textit{A practical way for computing approximate lower and upper correlation bounds}, {Am. Stat.} 65 (2011), pp. 104--109.

\bibitem[Gosho et~al., 2011]{gosho2011criterion}
M. Gosho, C. Hamada, and Yoshimura, 
\textit{Criterion for the selection of a working correlation structure in the generalized estimating equation approach for longitudinal balanced data}, {Commun. Stat. Theory Methods} 40 (2011), pp. 3839--3856.

\bibitem[Hammill and Preisser, 2006]{hammill2006sas}
B.G. Hammill and J.S. Preisser, 
\textit{A {SAS/IML} software program for {GEE} and regression diagnostics}, {Comput. Stat. Data Anal.} 51 (2006), pp. 1197--1212.

\bibitem[Hin and Wang, 2009]{hin2009working}
L.Y. Hin and Y.G. Wang, \textit{Working-correlation-structure identification in generalized estimating equations},
{Stat. Med.} 28 (2009), pp. 642--658.

\bibitem[Hojsgaard et~al., 2016]{geepack}
S. Hojsgaard, U. Halekoh, and J. Yan,
\textit{geepack: \uppercase{G}eneralized estimating equation},
{R package version 1.2-0.1, 2016.}
{Available at \url{https://cran.r-project.org/web/packages/geepack}}.

\bibitem[Inan and Demirtas, 2016]{PoisBinNonNor}
G. Inan and H. Demirtas,
\textit{PoisBinNonNor: Data generation with Poisson, binary and continuous components},
{R package version 1.1, 2016}.
Available at \url{https://cran.r-project.org/web/packages/PoisBinNonNor}.

\bibitem[Inan et~al., 2017]{PGEE} 
G. Inan, J. Zhou, and L. Wang,
\textit{PGEE: Penalized generalized estimating equations in high-dimension}, {R package version 1.5, 2017}.
{Available at \url{https://cran.r-project.org/web/packages/PGEE}}.

\bibitem[Jaman et~al., 2015]{jamanworking2015}
A. Jaman, M. Latif, W. Bari, and A.S. Wahed,
\textit{A determinant ‐ based criterion for working correlation structure selection in generalized estimating equations}, 
{Stat. Med.} 35 (2015), pp. 1819–-1833.

\bibitem[Liang and Zeger, 1986]{liang1986longitudinal} 
K.Y. Liang and S.L. Zeger,
\textit{Longitudinal data analysis using generalized linear models},
{Biometrika} 73 (1986), pp. 13--22.

\bibitem[Mancl and DeRouen, 2001]{mancl2001covariance}
L.A. Mancl and T.A. DeRouen,
\textit{A covariance estimator for gee with improved small-sample properties},
{Biometrics} 57 (2001), pp. 126--134.

\bibitem[Pan, 2001]{pan2001akaike}
W. Pan,
\textit{Akaike's information criterion in generalized estimating equations},
\newblock {Biometrics} 57 (2001) pp. 120--125.

\bibitem[Pardo and Alonso, 2017]{pardo2017working}
M.C. Pardo and R. Alonso, \textit{Working correlation structure selection in GEE analysis}, 
{Stat. Pap.} (2017), pp. 1--21, doi: 10.1007/s00362-017-0881-0.

\bibitem[Perin et~al., 2009]{perin2009wgee}
J. Perin, J.S. Preisser, and R.J. Rathouz,
\textit{Semiparametric efficient estimation for incomplete longitudinal binary data, with application to smoking trends},
{J. Am. Stat. Assoc.} 104 (2009), pp. 1373--1384.

\bibitem[Preisser and Qaqish, 1996]{preisser1996deletion}
J.S. Preisser and B.F. Qaqish,
\textit{Deletion diagnostics for generalised estimating equations},
{Biometrika} 83 (1996), pp. 551--562.

\bibitem[Preisser et~al., 2008]{preisser2008note}
J.S. Preisser, B.F. Qaqish, and J. Perin,
\textit{A note on deletion diagnostics for estimating equations},
{Biometrika} 95 (2008), pp. 509--513.

\bibitem[{R Core Team}, 2013]{rcore}
{R Core Team},
\textit{R: A Language and Environment for Statistical Computing},
R Foundation for Statistical Computing, Vienna, Austria, 2013.

\bibitem[SAS, 2013]{sas}
{SAS Institute Inc.},
\textit{SAS/STAT 9.4 User's Guide,}
{Cary, NC: SAS Institute Inc.}, 2013.

\bibitem[Rotnitzky and Jewell, 1990]{rotnitzky1990hypothesis}
A. Rotnitzky and N.P. Jewell,
\textit{Hypothesis testing of regression parameters in semiparametric generalized linear models for cluster correlated data},
{Biometrika} 77 (1990), pp. 485--497.

\bibitem[Shults and Chaganty, 1998]{shults1998analysis}
J. Shults and N.R. Chaganty,
\textit{Analysis of serially correlated data using quasi-least squares}, {Biometrics} 54 (1998), pp. 1622--1630.

\bibitem[Shults et~al., 2009]{shults2009comparison}
J. Shults, W. Sun, X. Tu, H. Kim, J. Amsterdam, J.M. Hilbe, 
and T. Ten-Have T,
\textit{A comparison of several approaches for choosing between working correlation structures in generalized estimating equation analysis of longitudinal binary data},
{Stat. Med.} 28 (2009), pp. 2338--2355.

\bibitem[Westgate, 2013]{Westgate2013}
P.M. Westgate, \textit{A bias correction for covariance estimators to improve inference with generalized estimating equations that use an unstructured correlation matrix}, {Stat. Med.} 32 (2013), pp. 2850--2858.

\bibitem[Westgate, 2014]{Westgate2014}
P.M. Westgate, \textit{Improving the correlation structure selection approach for generalized estimating equations and balanced longitudinal data}, {Stat. Med.} 33 (2014), pp. 2222--2237.

\bibitem[Westgate, 2016]{Westgate2016}
P.M. Westgate,  \textit{A covariance correction that accounts for correlation estimation to improve finite-sample inference with generalized estimating equations: A study on its applicability with structured correlation matrices}. {J. Stat. Comput. Simul.} 86 (2016), pp. 1891--1900.

\end{thebibliography}
\end{document}